
\documentclass[10pt,final,twocolumn,journal]{IEEEtran}

\usepackage[cmex10]{amsmath}		
\interdisplaylinepenalty=2500		

\usepackage{amssymb}
\usepackage{amsthm}
\usepackage{amsfonts}
\usepackage{enumerate}
\usepackage{array}
\usepackage{graphicx}
\usepackage[caption=false]{subfig}
\usepackage{wrapfig}
\usepackage{float}
\usepackage{tikz}
\usetikzlibrary{arrows}
\usetikzlibrary{shapes}
\usetikzlibrary{shapes.misc}
\usetikzlibrary{fit}
\usetikzlibrary{calc}
\usetikzlibrary{intersections}
\usetikzlibrary{decorations.pathmorphing}
\usetikzlibrary{decorations.pathreplacing}

\newtheorem{theorem}{Theorem}

\newtheorem{lemma}{Lemma}




\makeatletter
\DeclareRobustCommand*{\bfseries}{%
  \not@math@alphabet\bfseries\mathbf
  \fontseries\bfdefault\selectfont
  \boldmath
}
\makeatother

\makeatletter
\def\blfootnote{\xdef\@thefnmark{}\@footnotetext}
\makeatother

\begin{document}

\title{Repairable Fountain Codes}

\author{%
  Megasthenis~Asteris,~\IEEEmembership{Student~Member,~IEEE,} %
  Alexandros~G.~Dimakis,~\IEEEmembership{Member,~IEEE}%
  \IEEEcompsocitemizethanks{\IEEEcompsocthanksitem%
  M. Asteris and A. G. Dimakis are with the University of Texas at Austin.%
  }%
  \thanks{This research was supported in part by NSF Career Grant CCF-1055099
and research gifts by Intel and Microsoft Research.
A preliminary version of this work appeared in \cite{Asteris:ISIT12}.}%
}

\maketitle
\begin{abstract}
We introduce a new family of Fountain codes that are systematic and also have
sparse parities.
Given an input of $k$ symbols, our codes produce an unbounded number of output
symbols, generating each parity independently by linearly combining a
logarithmic number of randomly selected input symbols.
The construction guarantees that for any $\epsilon>0$ accessing a random subset
of $(1+\epsilon)k$ encoded symbols, asymptotically suffices to recover the
$k$ input symbols with high probability.
	
Our codes have the additional benefit of logarithmic locality: a single lost
symbol can be repaired by accessing a subset of $O(\log k)$ of the remaining
encoded symbols.
This is a desired property for distributed storage  systems where symbols  are
spread over a network of storage nodes.
Beyond recovery upon loss, local reconstruction provides an efficient
alternative for reading symbols that cannot be accessed directly.
In our code, a logarithmic number of disjoint local groups is associated
with each systematic symbol, allowing \emph{multiple parallel reads}.

Our main mathematical contribution involves analyzing the rank of sparse random matrices with specific structure over finite fields.
We rely on establishing that a new family of sparse random bipartite graphs have perfect matchings with high probability. 
\end{abstract}
\begin{IEEEkeywords}
  Systematic Fountain code, Logarithmic locality, Availability.
\end{IEEEkeywords}
\section{Introduction}

Fountain codes~\cite{Byers:1998:DigitalFountain,Luby:2002:LT,
Shokrollahi:2006:RC} form a new family of linear erasure codes
with several attractive properties.
For a given set of $k$ input symbols, a Fountain code produces a potentially
limitless stream of output symbols, each created independently of others as a
random combination of input symbols according to a given distribution.
Ideally, given a randomly selected subset of $(1+\epsilon) k$ encoded symbols, a
decoder should be able to recover the original $k$ input symbols
with high probability (w.h.p.) for some small overhead $\epsilon$.
Further, Fountain codes typically emphasize on efficient
encoding and decoding algorithms.

In this work, we design a new family of Fountain codes that combine multiple
properties appealing to distributed storage.
One property that is highly desired for distributed storage codes is 
\textit{systematic form}:
the original information symbols must appear in the encoded sequence.
Their presence enables the reading of source data without decoding and is a
practical
requirement for most storage applications.
Another important property of distributed storage codes is efficient
\textit{repair}~\cite{Dimakis:NetworkCoding, XorbasVLDB, Huang:2012:azure}:
when a single encoded symbol is lost it
should be possible to reconstruct it without communicating too much information
from other encoded symbols.
A related property is that of \textit{locality} of each symbol:
the number of encoded symbols that need to be accessed to reconstruct that
particular symbol~\cite{Khan:Fast12, Oggier, Papailiopoulos:2011:simple,
Gopalan:locality}.

A key observation is that in a systematic linear code, locality is strongly connected to the sparsity of parity symbols~\cite{Gopalan:locality}, 
\textit{i.e.}, the maximum number of input symbols combined in a parity symbol.
A parity symbol along with the systematic symbols covered by it form a
\textit{local group}.
Any symbol in this group can be reconstructed via an appropriate linear
combination of the remaining member symbols.
The smaller the size of the local group, the lower the locality of the symbols in it.

In existing Fountain codes, such as LT or Raptor codes, each encoded symbol is
a linear combination of $O( \log k )$ or even constant number of input symbols on average.
However, these codes are not systematic and the low degree of the encoded symbols does not imply low locality.
Certain classes, such as Raptor codes, can be transformed
into a systematic form~\cite{Shokrollahi:2006:RC} via a preprocessing of the input.
Unfortunately, due to the additional step, parity symbols are no longer
sparse in the original input symbols.

Standard Fountain codes support computationally efficient encoding and decoding algorithms as a result of a meticulously designed encoded symbol degree distribution.
Fast decoding algorithms, however, do not translate to efficient repair:
these algorithms aim at retrieving the entire source message from a set of available symbols, and are not tailored to the needs of a single symbol reconstruction.
On the contrary, single erasures -- the most frequent scenario in a distributed storage setting -- can be efficiently repaired by an erasure code featuring low locality.
If additionally the code is in systematic form,
full scale decoding is invoked only in the unlikely event of multiple erasures that cannot be locally repaired.
In other words, the presence of the source data in the encoded sequence in conjunction with low locality renders decoding an infrequently used operation, downgrading the need for efficient decoding algorithms.

The significance of locality is not limited to the repair problem, \textit{i.e.}, the reconstruction of a symbol upon its loss.
It extends to the closely related use case of \textit{degraded reads}.
In a distributed setup each symbol is stored on a different storage node, which may be temporarily unavailable due to a variety of reasons.
Although not permanently lost, a systematic symbol may not be directly accessible, and its local group provides an efficient alternative for reading it.
The \textit{availability} of a systematic symbol naturally extends the notion of locality, measuring the number of disjoint local groups the symbol belongs to.
We define the \textit{availability} of a systematic symbol as the number of disjoint sets of encoded symbols which can be used to reconstruct that
particular symbol. 
In effect, it characterizes the number of read requests for a particular systematic symbol that can be simultaneously served.

\textbf{Our Contribution:}
We introduce a new family of Fountain codes that are systematic and also have
parity symbols with logarithmic sparsity. We show that this is impossible if
we require the code to be MDS, but is possible if we require a near-MDS property
similar to the probabilistic guarantees provided by LT and Raptor codes.

More concretely, for any $\epsilon > 0$ we construct codes that guarantee that
a random subset of $(1+\epsilon)k$ symbols suffices to recover the original $k$
input symbols w.h.p.
Our codes produce an unbounded number of output symbols, creating each parity
independently by linearly combining a logarithmic number of randomly chosen input symbols.

We show that this structure also provides logarithmic locality: each symbol in our codes is
repairable by accessing only $O(\log k)$ other coded symbols.
We further define the notion of symbol availability and show that the systematic symbols in our codes feature logarithmic availability:
with high probability, for each systematic symbol there exist $O( \log k )$ disjoint sets of symbols that can be used to reconstruct it.
This means that multiple parallel jobs can read this symbol concurrently, each by accessing one disjoint set.
This new property is motivated by the straggler performance bottlenecks observed recently in distributed storage systems~\cite{Dean:2013:tail}.


One disadvantage of our construction is higher decoding complexity.
Our codes can be decoded by solving a system of linear equations over $\mathbb{F}_q$, 
which corresponds to maximum likelihood decoding for the erasure channel.  
This naive decoding can be implemented using Gaussian elimination and
requires $O(k^3)$ steps.
Fortunately, the matrices we construct are sparse, allowing faster decoding: 
Wiedemann's algorithm~\cite{Wiedemann} can be used to decode in $O(k^2 \text{polylog} k)$ time. 
Standard Fountain codes create linear equations that can be solved just by back-substitution
which amounts to decoding complexities of $O(k \log k)$ for LT~\cite{Luby:2002:LT} and 
$O(k)$ for Raptor~\cite{Shokrollahi:2006:RC} but offer no locality. 
It remains open to construct Fountain codes that have locality and near-linear decoding complexity.


Our main technical contribution is a novel random matrix result:
we show that a family of random matrices with non independent entries have full rank with high probability.
The analysis builds on the connections of matrix determinants to flows on
random bipartite graphs, using techniques from
\cite{Erdos:RandomMatrices, Dimakis:DecentralizedErasure}.
Our key result is showing that a new family of sparse random graphs
have matchings w.h.p.
Our random graph contribution is explained in Section \ref{sec:graph-perspective}.
\section{Problem Description}
\label{sec:problem_description}

Given $k$ input symbols, elements of a finite field $\mathbb{F}_q$, we want to
encode them into $n$ symbols using a linear code.
Linear codes are described by a $k \times n$ \textit{generator} matrix
$\mathbf{G}$ over $\mathbb{F}_q$, which when multiplied by an input vector
$\mathbf{u} \in \mathbb{F}^{1 \times k}_{q}$ produces a codeword
$\mathbf{v}=\mathbf{u}\mathbf{G} \in \mathbb{F}^{1 \times
n}_q$.
Ideally, we would like $\mathbf{G}$ to have the following properties:
\begin{itemize}
 \item \textit{Systematic form}, \textit{i.e.}, a subset of the columns of
$\mathbf{G}$ forms the identity matrix, $\mathbf{I}$, which implies that the
input symbols are reproduced in the encoded sequence.
 \item \textit{Rateless property}, \textit{i.e.}, each column is created
independently. The number $n$ of columns does not have to be specified for the
encoder a priori.
Equivalently, encoded symbols can be created or removed dynamically upon
request, without recreating the entire encoded sequence.
 \item \textit{MDS property}, \textit{i.e.}, any $k$ columns of $\mathbf{G}$
have rank $k$, implying that any subset of $k$ encoded symbols suffices to
retrieve the input.
 \item \textit{Low locality}. $\mathbf{G}$ has \textit{locality} $l$ if
each column can be written as a linear combination of at most $l$ other
columns.
If the code is systematic, then sparse parities suffice to obtain good locality
\cite{Gopalan:locality}.
\item \textit{High Availability}.
A systematic symbol has \textit{availability} $t$ if it can be written as
a linear combination of $t$ disjoint sets of symbols, of cardinality $l$.
The code has availability $t^\prime$, equal to that of the least
available systematic symbol.
\end{itemize}

For any code, any sufficiently large subset of encoded symbols should allow
recovery of the original data.
The size of such a set is tightly related to the reliability of the code.
For optimal reliability, \textit{i.e.}, in the case of MDS codes, an information
theoretically minimum subset of $k$ encoded symbols suffices to decode.
When equipped with systematic form, the generator matrix of an MDS code
affords no zero coefficient in the parity generating columns.
To verify that, consider a parity column with a zero coefficient in the
$i$-th position: that parity column along with any $k-1$ systematic columns
excluding the one corresponding to the $i$-th systematic symbol form a singular
matrix.

If parities are deliberately sparse in the input symbols, seeking to improve
the code's \textit{locality}, the property that any $k$ encoded symbols
suffice to retrieve the original data has to be relaxed.
In this work, we require that for $\epsilon>0$, a set of $k^\prime=(1+\epsilon)
k$ randomly selected encoded symbols suffice to decode with high probability;
the decoder may fail, but with a probability vanishing polynomially in $k$.
We refer to a code with this property as \textit{near-MDS}.

Under this constraint, we seek codes that achieve optimal locality, which
translates into determining how sparse the parities can be without violating the
decoding guarantee.
We will show that it is impossible to recover the original message with high
probability if the parities are linear combinations of fewer than
$\Omega(\log k)$ input symbols.
Furthermore, we will design codes that achieve logarithmic sparsity in the
parities and hence, order optimal locality.
We conclude the paper investigating the availability of our
construction, showing that with high probability every systematic symbol
belongs to a logarithmic number of disjoint local groups.
\section{Prior Work}
\label{sec:prior-work}
In LT codes, the first practical realizations of Fountain codes invented by Luby
\cite{Luby:2002:LT}, the average degree of the output symbols, \textit{i.e.},
the number of input symbols combined into an output symbol, is $O\left( \log k
\right)$.
Note, however, that sparsity in this case does not imply good locality, since LT
codes lack systematic form.

Building on LT, Shokrollahi in \cite{Shokrollahi:2006:RC} introduced Raptor
codes, a different class of Fountain codes.
The core idea is to precode the input symbols prior to the application of an appropriate LT code.
By virtue of the two layer encoding, 
the per symbol encoding cost -- which corresponds to average degree of encoded symbols -- is reduced to a constant,
while the $k$ input symbols can be retrieved in linear time by a set of $(1+\epsilon)k$ encoded symbols, with probability of failure at most inversely polynomial in $k$.
However, the original Raptor design does not feature the highly desirable systematic form.
Further, similar to the LT codes, the constant average degree of the encoded symbols does not imply good locality.

In the same work~\cite{Shokrollahi:2006:RC},
Shokrollahi provided a construction that yields a systematic flavor of Raptor codes.
The Raptor encoding is not applied directly on the input symbols, rather on the output of a preprocessing step of complexity $O(k^2)$.
The source symbols appear in the encoded stream, but due to the preprocessing step the parity symbols are no longer sparse in the original input symbols, despite their constant average degree.



Gummadi in his thesis \cite{Gummadi:thesis} was the first to consider the design of Fountain
codes explicitly oriented for storage applications, \textit{i.e.}, codes that feature systematic form and efficient repair.
The latter is quantified by \textit{repair complexity}: the average number of symbol operations performed to repair a set of erased symbols.
Gummadi proposes systematic variants of LT and Raptor codes 
that feature low (even constant) expected repair complexity.
However, the overhead $\epsilon$ required for decoding is suboptimal: it cannot be made arbitrarily small.

Our main result is the analysis of the rank of a new family of sparse random matrices over  $\mathbb{F}_q$.
In particular, we investigate the probability that a $k \times (1+\epsilon) k$ matrix comprising any number $0 \le s \le k$ of systematic columns and $(1+\epsilon) k -s $ random $O(\log k)$-sparse columns has full rank.
There is a long line of work on the distribution of the rank of sparse random matrices over a finite field (\textit{e.g.}, work by Karp~\cite{Karp}, Kovalenko~\cite{kovalenko}, Balakin~\cite{Balakin}, Cooper~\cite{Cooper} and references therein).
That line of work, however, typically focuses on random matrices whose entries are independently distributed.
In our case, systematic columns carry exactly one nonzero entry, while the number of nonzero entries in the remaining columns is strictly upper bounded, rendering  column entries dependent.


\section{Repairable Fountain Codes}
\label{sec:our-construction}
We introduce a new family of Fountain codes that are systematic and also have
sparse parities.
Each parity symbol is a random linear combination of up to $d$ randomly
chosen input symbols.
Due to their randomized nature, our codes provide a probabilistic guarantee on
successful decoding.
In particular, we require that a set of $k^\prime=(1+\epsilon)k$ randomly
selected encoded symbols, for arbitrarily small $\epsilon > 0$ can be decoded
successfully with high probability, \textit{i.e.}, with
probability of failure vanishing like $1/\text{poly}(k)$.
We show that under this constraint, $d$ must be of at least logarithmic
order in $k$, \textit{i.e.}, $d = \Omega\left( \log{k}\right)$.
Surprisingly, however,  a logarithmic sparsity level for the parity symbols
is also achievable, hence  $d = \Theta\left( \log{k}\right)$.
The sparsity of the parity columns corresponds to the \textit{locality} of the
code family.
Our main result, which is asymptotic in $k$, is established in
Theorem \ref{thm:main-theorem}, at the end of this section.
We conclude the section with a study of the \textit{availability} of our
construction.

Given a vector $\mathbf{u}$ of $k$ input symbols in $\mathbb{F}_q$, the code is
a linear mapping of $\mathbf{u}$ to a vector $\mathbf{v}$ of higher dimension
$n$ through a $k \times n$ matrix $\mathbf{G}$.
The encoded sequence comprises an un-encoded copy of the $k$ input
symbols augmented by parity symbols, hence the \textit{systematic} form.
Without loss of generality, we may assume that $\mathbf{u}$ lies in the
first indices of $\mathbf{v}$ followed by the parity symbols.
A single parity symbol is constructed in a two step process.
First, $d$ input symbols are successively selected uniformly at random,
independently, with replacement.
Then, a coefficient is uniformly drawn from $\mathbb{F}_q$ for each
symbol previously selected.
The parity is the linear combination of the symbols selected in the first step,
weighted with the coefficients drawn in the second step.
The same procedure is repeated independently for subsequent parity symbols.
The independent construction of parities is the hallmark of a Fountain code.
\begin{figure}[htbp!]
  \centering
  \begin{tikzpicture}[scale=1,->,>=stealth',shorten >=1pt,auto,node distance=1cm, thick,
  every node/.style={scale=1},
symbol node/.style={circle, draw,minimum height=0.7cm},
parity node/.style={rectangle, draw,minimum height=0.7cm,minimum width=0.7cm}
]

    \node[symbol node] (U1) at (0,0) {$u_1$};
    \node 		(U2) [below of=U1] {$\vdots$};
    \node[symbol node] (U3) [below of=U2] {$u_i$};
    \node		(U4) [below of=U3] {$\vdots$};
    \node[symbol node] (U5) [below of=U4] {$u_k$};

    \node[symbol node] (V1) at (3, 1) {$v_1$};
    \node 		(V2) [below of=V1] {$\vdots$};
    \node[symbol node] (V3) [below of=V2] {$v_i$};
    \node		(V4) [below of=V3] {$\vdots$};
    \node[parity node] (V5) [below of=V4] {$v_j$};
    \node		(V6) [below of=V5] {$\vdots$};
    \node[parity node] (V7) [below of=V6] {$v_n$};

    \draw 		(V1) edge [bend right, out=-20, in=-160] (U1);
    \draw 		(V3) edge [bend right, out=-20, in=-160] (U3);

    \draw 		(V5) edge [out=130, in=0] node[above, near start]
{$w_{ij}$} (U3) ;
    \draw[color=gray] (V5) edge [out=140, in=22.5] (1.5, -2.5);
    \draw[color=gray] (V5) edge [out=150, in=45] (1.5, -3);

    \draw (V7) edge [out=160, in=-45] (1.5, -4);
    \draw (V7) edge [out=210, in=315, looseness=1] (1.5, -5);
    \node at (1.5, -4.5) {$\vdots$};
    \node[above, rotate=90] at (1.5, -4.5) {$d(k)$};

%
%

  \end{tikzpicture}
  \caption{Bipartite graph $G=(U,V,E)$ corresponding to our randomized code
construction: $U$ is the set of $k$ input symbols and $V$ is the set of $n$
encoded symbols.
The first $k$ vertices in $V$ have degree one and correspond to the systematic
part of the encoded sequence.
Each one of the remaining vertices independently and uniformly throws $d(k)$
edges,  with two or more edges possibly landing on the same vertices in $U$.}
  \label{fig:bipartite_near_mds}
\end{figure}
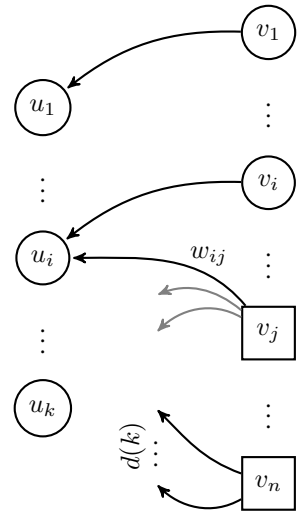
 
It is useful to describe our randomized construction through a correspondence
to a bipartite graph $G=(U,V,E)$, depicted in
Figure \ref{fig:bipartite_near_mds}.
The set $U$ of vertices on the left side corresponds to the $k$ input symbols,
and the set $V$ on the right corresponds to the $n$ symbols of the encoded
sequence.
An edge $(u_i,v_j) \in E$ if the input symbol $u_i \in U$ is one of the symbols
participating in the formation of the encoded symbol $v_j \in V$.
Each of the $k$ first vertices in $V$ has degree equal to one and is connected
to a distinct input symbol.
These $k$ vertices correspond to the deterministically constructed systematic
part of the encoded sequence.
Each one of the remaining vertices corresponds to a parity symbol and
forms its neighborhood through the following randomized procedure.
Node $v_j$ throws an edge to a vertex in $U$ selected uniformly at random.
This step is repeated a total number of $d$ times, independently.
At the end of this process, vertex $v_j$ has selected a subset
$\mathcal{N}(v_j)$
of vertices in $U$, the neighborhood of $v_j$.

The parity symbol corresponding to vertex $v_j$  is a random linear combination
of the input symbols in its neighborhood.
Slightly abusing notation, $v_j$ is used to denote both the vertex and
the corresponding entry in the encoded sequence $\mathbf{v}$.
The $j$-th encoded symbol can be written as
\begin{equation}
 v_j = \sum_{i \in \mathcal{N}(v_j)} w_{ij}u_i,
\end{equation}
where $w_{ij}$'s are randomly selected uniformly and independently from
$\mathbb{F}_q$.
The coefficients $w_{ij}$'s can be embedded in the graph representation as
weights on the corresponding edges.
The edges of the systematic part have unit weights.

The degree of $v_j \in V$, \textit{i.e.}, the size of its neighborhood
$|\mathcal{N}(v_j)|$ can be at most equal to $d$, the number of edges thrown.
It can be strictly smaller if a vertex in $U$ is selected multiple times.
However, when $d$ is much smaller than $k$, $|\mathcal{N}(v_j)|$
will be equal to $d$ with high probability, and a parity symbol will be a
linear combination of $d$ input symbols.
To emphasize that $d$ is allowed to grow as a function of $k$,
we will hereafter denote it by $d(k)$.

Returning to the matrix representation, the code construction corresponds to a
family of generator matrices $\mathbf{G}$ of the form $\mathbf{G}=
\left[ \; \mathbf{I}_{k \times k}\;|\; \mathbf{P}\;\right]$.
Every encoded symbol corresponds to a column of $\mathbf{G}$.
The identity part confers the systematic form.
$\mathbf{P}$, the part responsible for the construction of the parity
symbols, is a random matrix whose columns are sparse, each bearing at most
$d(k)$ nonzero entries.
Any $k$ encoded symbols corresponding to linearly independent columns of
$\mathbf{G}$ suffice to retrieve the input $\mathbf{u}$.
Conversely, reconstructing $\mathbf{u}$ from a randomly chosen set of $k^\prime
> k$ encoded symbols is possible only if $k$ symbols among them correspond to
linearly independent columns.
Therefore, the key property required for successful decoding
of a set $\mathcal{S}$ of $k^\prime$ randomly selected encoded symbols is
that $\mathbf{G}_{\mathcal{S}}$, the $k \times k^\prime$ matrix formed by the
corresponding columns of $\mathbf{G}$, including
any combination of systematic and parity parts, has full rank w.h.p.

The probability that the input can be recovered from $k^\prime$ randomly
selected encoded symbols increases with $d(k)$.
Equivalently, fewer encoded symbols suffice to attain a certain probability
of successful decoding.
We have highlighted the extreme case of systematic MDS codes: for the
optimal guarantee that any $k$ symbols suffice to recover the input, $d(k)$ can
be no less than $k$.
To gain further insight, note that for a set of $k^\prime$ encoded symbols to
be successfully decoded, it is necessary that all input symbols are covered by
that set.
As $d(k)$ decreases, so does the probability that a particular symbol is
covered by the parities in a set of $k^\prime$ encoded symbols, impacting the
decoding guarantees.

On the other hand, as noted Section \ref{sec:problem_description}, a
systematic code with sparse parities has good locality.
The relation between $d(k)$ and locality is straightforwardly quantifiable:
any parity symbol $v$ is a linear combination of at most $d(k)$ systematic
symbols.
Also rearranging the terms, any systematic symbol $u$ covered by a parity $v$
can be written as
a linear combination of $v$ and the remaining systematic symbols covered by $v$.
Under the assumption that there exist at least one parity symbol covering every
systematic 
symbol $u$, the code has locality at most $d(k)$.

In summary, decreasing $d(k)$ improves the locality of the code, with a toll
on the probability of successful decoding of a random set of
$k^\prime = (1+\epsilon)k$ encoded symbols, where $\epsilon$ is a positive constant denoting the decoding overhead.
Our primary contribution, portrayed in Theorem \ref{thm:main-theorem}, is
identifying how \textit{small} $d(k)$ can be to ensure that a randomly selected
set of $k^\prime = (1+\epsilon)k$ symbols is decodable, or equivalently that a $k \times
k^\prime$ submatrix $\mathbf{G}_{\mathcal{S}}$ of $\mathbf{G}$ is full rank, 
with high probability.
\begin{theorem}
  \label{thm:main-theorem}
  Consider a matrix $\mathbf{G}=\left[ \;\mathbf{I}_{k \times k}\;| \;
\mathbf{P} \;
  \right]$, where each column of $\mathbf{P}$ is independently constructed as follows:
  $(i)$ $d(k) = c\cdot \log{k}$ out of the $k$ entries are selected uniformly at random with replacement, 
  and $(ii)$ a value drawn uniformly at random over $\mathbb{F}_q$ is independently assigned to each entry selected in step $(i)$.
  Then, for constant $c  = (8 + \rho +
2\epsilon)/\epsilon > 0$, and $q > k$,
  a randomly selected $k\times (1+\epsilon)k$ submatrix
  $\mathbf{G}_{\mathcal{S}}$ of $\mathbf{G}$ containing any number of systematic columns is full rank with probability at
least
  $1-(k/q)-k^{-\rho}$.
\end{theorem}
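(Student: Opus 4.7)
The strategy is to split the analysis into a combinatorial step (the bipartite graph underlying the parity supports has a matching) and an algebraic step (Schwartz--Zippel on the weights), mirroring the two-stage randomness in the parity construction: first the row supports drawn in step~(i), then the weights drawn in step~(ii). Condition on the number $s$ of systematic columns in $\mathcal{S}$ and on the set $T\subseteq[k]$, $|T|=s$, of input indices they cover. The systematic columns span the coordinate subspace on $T$, so $\mathbf{G}_{\mathcal{S}}$ is full rank iff the $(k-s)\times((1+\epsilon)k-s)$ matrix $\mathbf{P}'$ obtained by restricting the parity columns of $\mathcal{S}$ to rows $[k]\setminus T$ has rank $k-s$. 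Because the rows of $\mathbf{P}$ are exchangeable, it suffices to bound the failure probability uniformly in $T$ and then union-bound over (or simply average out) $s$.

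For the algebraic step I will use the Schwartz--Zippel lemma. Expose first the supports drawn in step~(i) and form the bipartite support graph $H=(L,R,E)$ with $L=[k]\setminus T$, $R$ the parity columns in $\mathcal{S}$, and an edge $(i,j)$ whenever column $j$ selected row $i$. If $H$ admits a matching $\sigma^\star$ saturating $L$, select the $k-s$ matched columns; the resulting square minor of $\mathbf{P}'$ has a perfect matching in its support, and its Leibniz determinant is a multilinear polynomial of total degree $k-s$ in the independent uniform weights $w_{ij}\in\mathbb{F}_q$. The monomial induced by $\sigma^\star$ uses a distinct set of edge-variables from any other permutation's monomial, so no cancellation is possible and the polynomial is not identically zero. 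Schwartz--Zippel then bounds its vanishing probability by $(k-s)/q\le k/q$, which yields the first error term.

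The main obstacle is the combinatorial step: showing that $H$ admits such a matching with probability at least $1-k^{-\rho}$. By Hall's theorem, the failure probability is at most $\sum_{a=1}^{k-s}\binom{k-s}{a}\Pr[\,|N_H(A)|<a\,]$ for $A$ any fixed subset of size $a$. Each parity column misses $A$ independently with probability $\bigl((k-a)/k\bigr)^{d(k)}$, so $|R\setminus N_H(A)|$ is a binomial random variable, to which a Chernoff-type tail bound applies. The delicate regime is $a=\Theta(k)$, where the entropy of $\binom{k-s}{a}$ is maximal and the per-column miss probability is not yet overwhelmingly small; the constant $c=(8+\rho+2\epsilon)/\epsilon$ in $d(k)=c\log k$ is calibrated precisely so that the summand decays polynomially faster than $k^{-\rho}$ throughout this range, while the small-$a$ and $a$-near-$(k-s)$ extremes are easier but need separate bookkeeping (the slack $\epsilon k$ between $|R|$ and $|L|$ is what makes the large-$a$ tail tractable). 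Combining the algebraic and combinatorial bounds by a union bound, and noting the analysis is uniform in $s$ and $T$, yields the claimed failure probability $k/q+k^{-\rho}$.
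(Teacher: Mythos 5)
Your proposal mirrors the paper's proof essentially step for step: reduce to showing the support bipartite graph between the unmatched input rows and the parity columns saturates the left side (the paper's Lemma 2, proved via Hall's condition / contracting sets, a union bound over subset sizes, and the entropy bound on binomial coefficients), then apply Schwartz--Zippel to the determinant of a minor induced by that matching (the paper's Lemma 1), and combine the two failure probabilities by a union bound. The only cosmetic difference is your framing of the per-size tail as a Chernoff bound on a binomial; the paper instead uses the direct union bound $\binom{k'-s}{k'-s-i+1}\left(\frac{k-i}{k}\right)^{d(k)(k'-s-i+1)}$ and explicitly traces the entropy and logarithm terms to arrive at the stated constant $c$, a bookkeeping step your sketch defers but would need to carry out to certify that exact value of $c$.
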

Theorem \ref{thm:main-theorem:converse} establishes a converse
result stating that the \textit{sufficient} value of $d(k)$ of Theorem
\ref{thm:main-theorem} is order-optimal for our construction.
\begin{theorem}
 \label{thm:main-theorem:converse}
  (Converse) If each column of $\mathbf{P}$ is generated independently as
described with at most $d(k)$ nonzero entries, then $d(k) = \Omega(\log{(k)})$ is
necessary for a random $k\times k^\prime$ submatrix $\mathbf{G}_{\mathcal{S}}$
of $\mathbf{G}=\left[ \;\mathbf{I}_{k \times k}\;| \; \mathbf{P} \; \right]$ to
be full rank
w.h.p.
\end{theorem}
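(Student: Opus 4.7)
The plan is to prove the contrapositive: if $d(k) = o(\log k)$, then a random $k \times k'$ submatrix $\mathbf{G}_{\mathcal{S}}$ fails to be full rank with high probability. Since the converse need only hold for some natural sampling rule, I would pick the worst case for the decoder: $\mathcal{S}$ consists entirely of parity columns. This is the appropriate regime for a rateless code, where $n$ can be taken arbitrarily larger than $k$ and a random subset of $k' = (1+\epsilon)k$ encoded columns contains essentially no systematic ones.

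The first step is to reduce full-rank failure to an uncoverage event on the bipartite graph of Fig.~\ref{fig:bipartite_near_mds}. If some input index $i \in \{1,\dots,k\}$ lies outside the support of every parity column in $\mathcal{S}$, then row $i$ of $\mathbf{G}_{\mathcal{S}}$ is identically zero, so $\mathbf{G}_{\mathcal{S}}$ is singular regardless of the coefficients $w_{ij}$. Letting $X_i$ indicate this event and $X = \sum_{i=1}^{k} X_i$, the independent with-replacement sampling of $d(k)$ darts by each of the $k'$ parity columns yields
\begin{equation*}
  \Pr(X_i = 1) \;=\; (1 - 1/k)^{d(k)\,k'} \;\geq\; e^{-(1+o(1))(1+\epsilon)\,d(k)}.
\end{equation*}
Consequently $\mathbb{E}[X]$ diverges polynomially in $k$ whenever $d(k) \leq (1-\delta)\log k /(1+\epsilon)$ for some fixed $\delta > 0$.

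The next step is a second-moment argument that upgrades the divergent expectation to concentration. For $i \neq j$, the computation $\Pr(X_i X_j = 1) = (1 - 2/k)^{d(k)k'} \leq \Pr(X_i = 1)\Pr(X_j = 1)$ shows that distinct uncoverage events are negatively correlated, hence $\mathrm{Var}(X) \leq \mathbb{E}[X]$. A direct application of Chebyshev's inequality then gives $\Pr(X = 0) \leq 1/\mathbb{E}[X] \to 0$, so with high probability some row of $\mathbf{G}_{\mathcal{S}}$ is identically zero and the matrix is rank-deficient. Contrapositively, full rank w.h.p.\ forces $d(k) = \Omega(\log k)$.

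The only mildly delicate step is the concentration argument, since a naive first-moment bound only produces an expected uncoverage rather than its near-certainty; the negative correlation of the row-level indicators is what makes Chebyshev go through cleanly. Everything else -- the adversarial choice of $\mathcal{S}$, the per-row miss probability, and the threshold $d(k) \approx \log k / (1+\epsilon)$ -- is routine, and the argument is robust to the exact sampling model since the per-column probability of missing any fixed position is $1 - d(k)/k$ up to lower-order terms in either the with- or without-replacement variant.
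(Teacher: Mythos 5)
Your proof is correct and takes essentially the same approach as the paper: reduce rank deficiency to the existence of an uncovered input row, then invoke a balls-into-bins/coupon-collector threshold to show that $k'\cdot d(k) = \Omega(k\log k)$ throws are necessary. The paper merely cites the balls-and-bins result as standard, whereas you supply the explicit first- and second-moment computation (with the negative correlation of row-miss indicators making Chebyshev go through); that is a welcome expansion rather than a different route, and your choice to take $\mathcal{S}$ all-parity simply makes explicit the worst case the paper is implicitly bounding.
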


From the two theorems, it follows that our codes achieve optimal locality with a
logarithmic degree for every parity symbol.
Original data is reconstructed in $O(k^3)$ using Maximum Likelihood (ML)
decoding, which corresponds to solving a linear system of $k^\prime$ equations
over $\mathbb{F}_q$.
Note, however, that the Wiedemann algorithm \cite{Wiedemann}
can reduce complexity to $O\left(k^2\text{polylog}(k)\right)$ on average, exploiting the
sparsity of the linear equations, with negligible extra memory requirement.
Finally, we note a drawback of our analysis: in order to achieve vanishingly
small probability of failure as $k$ grows, the size of the field must grow
accordingly.
It suffices, however, that the number of bits per symbol grows logarithmically in $k$:
a symbol size of $(t+1)\log k$ bits, $t>0$, implies that $k/q = 1/k^t$.

\begin{figure}[htbp!]
 \centering
    \begin{tikzpicture}[scale=0.67,
    every node/.style={scale=0.67},->,>=stealth',shorten >=1pt,auto,node
distance=1.5cm, thick,
    symbol node/.style={circle, draw, minimum height=1cm},
    parity node/.style={rectangle, draw, minimum height=1cm, minimum
width=1cm}
    ]

    \node[symbol node] (U1) at (0,0) {$u_1$};

    \node[parity node] (V1) at (4, 3) {$v_1$};
    \node[parity node] (V2) [below of=V1] {$v_2$};
    \node[parity node] (V3) [below of=V2] {$v_3$};
    \node              (V4) [below of=V3] {$\vdots$};
    \node[parity node] (V5) [below of=V4] {$v_M$};

    \draw[thin] (V1) edge [looseness=.2, out=180, in=45] (U1);
    \draw[thin] (V2) edge [looseness=.2, out=180, in=22.5] (U1);
    \draw[thin] (V3) edge [looseness=.2, out=180, in=0] (U1);
    \draw[thin] (V5) edge [looseness=.2, out=180, in=-45] (U1);
    \node[symbol node] (U2) at (8,5.15)   {$u_2$};
    \node[symbol node] (U3) [below of=U2] {$u_3$};
    \node[symbol node] (U4) [below of=U3] {$u_4$};
    \node[symbol node] (U5) [below of=U4] {$u_5$};
    \node[symbol node] (U6) [below of=U5] {$u_6$};
    \node              (U7) [below of=U6] {$\vdots$};
    \node[symbol node] (U8) [below of=U7] {};
    \node[symbol node] (U9) [below of=U8] {$u_k$};
    \node[rectangle, fit = (U2) (U3) (U4), draw=gray, thick, rounded  corners =
5, inner sep = 0.15cm] (footprintV1){};
    \node[rectangle, fit = (U3) (U4) (U5), draw=gray, rounded corners =
5, inner sep = 0.05cm] (footprintV2){};
    \node[rectangle, fit = (U5) (U6), draw=gray, thick, rounded  corners =
5, inner sep = 0.15cm] (footprintV3){};
    \node[rectangle, fit = (U8) (U9), draw=gray, thick, rounded  corners =
5, inner sep = 0.15cm] (footprintVM) {};
    \draw [->, black, thin, looseness=.3, out=40, in=200] (V1) edge
($(U2)+(-0.6, 0.55)$);
    \draw [->, black, thin, looseness=.3, out=-20, in=170] (V1) edge
($(U4)+(-0.6, -0.55)$);
    \draw [->, black, thin, looseness=.3, out=45, in=200] (V2) edge
($(U3)+(-0.55, 0.4)$);
    \draw [->, black, thin, looseness=.3, out=-30, in=160] (V2) edge
($(U5)+(-0.55, -0.4)$);
    \draw [->, black, thin, looseness=.3, out=30, in=180] (V3) edge
($(U5)+(-0.6, 0.55)$);
    \draw [->, black, thin, looseness=.3, out=-30, in=180] (V3) edge
($(U6)+(-0.6, -0.55)$);
    \draw [->, black, thin, looseness=.3, out=20, in=160] (V5) edge
($(U8)+(-0.6, 0.55)$);
    \draw [->, black, thin, looseness=.3, out=-45, in=160] (V5) edge
($(U9)+(-0.6, -0.55)$);
    \draw[-, line width=.5pt,black,decorate,decoration={amplitude=7pt,brace}]
  ($(footprintV1.north east)+(0.2, 0)$) -- ($(footprintV1.south east)+(0.2,
0)$);
   \node[anchor=west, rotate=0, text width=2.7cm, font=\large] at
($(U3)+(1.5, 0)$) {$\mathcal{F}_{v_1}$: The \textit{footprint} of $v_1 \in
\mathcal{P}_{u_1}$.};
    \node (disjointfooprint) at ($(U7)+(1.1,0)$)  {};
    \draw[ ->, line width=.3pt, black, dashed, looseness=0.5, out=90, in=290]
  (disjointfooprint) edge ($(footprintV3.south east)+(0.2, 0)$);
    \draw[ ->, line width=.3pt, black, dashed, looseness=0.5, out=270, in=70]
  (disjointfooprint) edge ($(footprintVM.north east)+(0.2, 0)$);
   \node[anchor=west, rotate=0, text width=2.7cm, font=\large] at
($(U7)+(1.5, 0)$) {$\mathcal{F}_{v_3}$ and $\mathcal{F}_{v_M}$ are disjoint. $v_3$
and $v_M$ are \textit{isolated}.};
  \node (availabilityset) at ($(U1)+(1, -4)$) {};
  \draw[ ->, line width=.3pt, black, dashed, bend right]
  (V1) edge (availabilityset);
  \draw[ ->, line width=.3pt, black, dashed, bend right]
  (V3) edge (availabilityset);
  \draw[ ->, line width=.3pt, black, dashed, bend right]
  (V5) edge (availabilityset);
  \node[anchor=north, rotate=0, text width=4.5cm, font=\large] at
($(availabilityset)+(1, 0)$) {$v_1, v_3$ and $v_M$ are pairwise
\textit{isolated}. The \textit{availability} of $u_1$ is at least $3$.};
\end{tikzpicture}
\caption{%
  The systematic symbol $u_1$ is covered by $M$ parities, $v_1,\hdots,
  v_M$.
  The footprint of such a parity $v_i$ (with respect to $u_1$) is the set of
  symbols it covers excluding $u_1$, \textit{e.g.}, the footprint of $v_1$ is
  $\mathcal{F}_{v_1} = \lbrace u_2, u_3, u_4\rbrace$.
  Symbol $u_1$ can be reconstructed using any of the above parity symbols and
  its footprint.
  In this example, $u_1$ has availability at least $3$ since the footprints of
  parities $v_1$, $v_3$, and $v_M$ are disjoint.
}
\label{fig:footprints}
\end{figure}
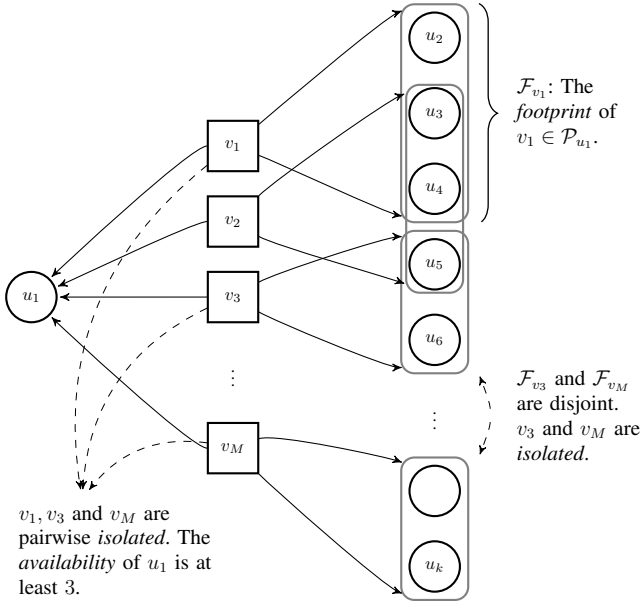
Thus far, we have seen that our randomized construction achieves logarithmic
locality:
every encoded symbol belongs to at least one \textit{local group} of
cardinality $d(k)+1$.
There is a one-to-one correspondence between the local groups $u$ belongs to
and the parity symbols that cover $u$.
Let $\mathcal{P}_u = \lbrace v_j : k+1 \le j \le n, u \in
\mathcal{N}(v_j)\rbrace$
be the subset of parities that cover $u$.
Its cardinality is a binomial random variable since every generated parity
independently covers $u$ with some probability.
If the total number of parities generated is $rk$ for some constant $r > 0$,
\textit{i.e.}, proportional
to the length of the input, then for $d(k) = c\log(k)$ of Theorem
\ref{thm:main-theorem} every systematic symbol is in fact covered by a
logarithmic number of parities w.h.p.
\begin{theorem}
  \label{thm:ub-prob-exist-u-few-parities}
  Let $rk$ be the total number of parities generated, for a constant $r>0$,
  with each parity symbol constructed as a linear combination of $d(k) = c
\log(k)$ independently selected symbols uniformly at random with replacement.
  The expected number of parities covering a systematic symbol $u$ is
  \begin{align}
  rc\log(k) - \frac{rc\log^2(k)}{k} \le E \left[ |\mathcal{P}_u| \right]  \le
  rc\log(k).
  \end{align}
  Further,
  \begin{align}
  \text{Pr}\left( \exists u : \left|\mathcal{P}_u\right| \right. &\le \left.
  (1-\epsilon)E[\left|\mathcal{P}_u\right|] \right)
  \nonumber\\
  &\le \frac{1}{k^{\frac{rc\epsilon^2}{2}-1}} \exp\left(
  \frac{\epsilon^2}{2} \frac{rc\log^2(k)}{k} \right).
  \label{eq:ub-prob-exist-u-few-parities}
  \end{align}
\end{theorem}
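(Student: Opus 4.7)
\medskip
\noindent\textbf{Proof plan.} The main observation is that $|\mathcal{P}_u|$ is a sum of $rk$ independent indicator variables, since parities are constructed independently. For a fixed systematic symbol $u$, the probability that a single parity $v_j$ covers $u$ equals the probability that at least one of its $d(k)=c\log k$ independent uniform selections lands on $u$, namely
\begin{equation}
  p \;=\; 1-\left(1-\tfrac{1}{k}\right)^{c\log k}.
\end{equation}
Hence $|\mathcal{P}_u|\sim\mathrm{Binomial}(rk,p)$ and $E[|\mathcal{P}_u|]=rkp$. I would then derive the two sandwich bounds on $E[|\mathcal{P}_u|]$ stated in the theorem. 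The upper bound follows from Bernoulli's inequality $(1-1/k)^{c\log k}\ge 1-c\log(k)/k$, which yields $p\le c\log(k)/k$ and therefore $E[|\mathcal{P}_u|]\le rc\log k$. For the lower bound I would use the alternating binomial expansion (or, equivalently, $1-x\le e^{-x}$ combined with $e^{-y}\le 1-y+y^2/2$ for $y\ge 0$) to obtain $(1-1/k)^{c\log k}\le 1-c\log(k)/k+O(\log^2(k)/k^2)$, so that $p\ge c\log(k)/k-c\log^2(k)/k^2$ and multiplication by $rk$ gives the claimed lower bound.

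For the deviation inequality, I would apply the standard multiplicative Chernoff bound to the sum $|\mathcal{P}_u|$ of independent Bernoullis. Writing $\mu_u := E[|\mathcal{P}_u|]$, this yields
\begin{equation}
  \Pr\bigl(|\mathcal{P}_u|\le(1-\epsilon)\mu_u\bigr)\;\le\;\exp\!\left(-\tfrac{\epsilon^{2}}{2}\mu_u\right).
\end{equation}
Plugging in the previously derived lower bound $\mu_u\ge rc\log k-rc\log^2(k)/k$ splits the exponent into the desired dominant piece $\exp(-\tfrac{\epsilon^2}{2}\cdot rc\log k)=k^{-rc\epsilon^2/2}$ and the correction factor $\exp(\tfrac{\epsilon^2}{2}\cdot rc\log^2(k)/k)$.

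Finally, I would take a union bound over the $k$ systematic symbols, multiplying the above by $k$, which converts the leading $k^{-rc\epsilon^{2}/2}$ into $k^{-(rc\epsilon^{2}/2-1)}$ and yields exactly the bound claimed in \eqref{eq:ub-prob-exist-u-few-parities}. The only real subtlety is ensuring that the parities' coverage events are independent across different parities (which they are, by construction) even though the $d(k)$ selections within a single parity are correlated; this is precisely what makes $|\mathcal{P}_u|$ a sum of \emph{independent} Bernoullis and allows direct use of Chernoff. The second-order correction in the expectation lower bound — and its propagation as the $\exp(\tfrac{\epsilon^2}{2}rc\log^2(k)/k)$ factor — is the main bookkeeping step, but it is routine once the Chernoff step is set up carefully.
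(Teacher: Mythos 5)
Your proposal is correct and follows essentially the same route as the paper: $|\mathcal{P}_u|\sim\mathrm{Binomial}(rk,p)$ with $p=1-(1-1/k)^{d(k)}$, the sandwich on $p$ via $p\le d(k)/k$ (the paper uses a union bound over the $d(k)$ draws, you use Bernoulli's inequality — equivalent) and $p\ge 1-e^{-d(k)/k}\ge d(k)/k-d(k)^2/k^2$, then the one-sided multiplicative Chernoff bound, and a final union bound over the $k$ systematic symbols. In fact your write-up makes the concluding union bound over $u$ explicit, which the paper leaves implicit (its displayed chain stops at the per-symbol tail bound $k^{-rc\epsilon^2/2}\exp(\epsilon^2 rc\log^2(k)/2k)$ and declares the theorem proved).
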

For any $\epsilon$, an appropriate choice of $r$, and $c$ achieves a vanishing
bound in \eqref{eq:ub-prob-exist-u-few-parities}.
The above result states that with high probability all input symbols are
covered by at least $(1-\epsilon)rc\log(k)$ symbols for some $\epsilon>0$.
In the following, we will omit the constant $1-\epsilon$ for simplicity, and
assume that every systematic symbol is covered by at least $m = rc\log(k)$.

The \textit{availability} of the input symbol $u$ is the cardinality
of the largest subset of \textit{local groups} containing $u$ whose only
common element is $u$.
More formally, consider a parity $v_i$ that covers the systematic symbol $u$,
\textit{i.e.}, $v_i \in \mathcal{P}_u$.
Then $v_i$ is a linear combination of the symbols
in $\{u\} \cup \mathcal{F}_{v_i}$, where $\mathcal{F}_{v_i}$ contains the remaining
symbols covered by $v_i$.
The set $\mathcal{F}_{v_i}$ is referred to as the \textit{footprint} of $v_i$ with
respect to $u$.
Note that the \textit{footprint} of a parity symbol $v_i$ is a concept relative
to the systematic symbol $u$ under consideration.
Two parities $v_i, v_j \in \mathcal{P}_u$ are \textit{isolated} if their
footprints are disjoint.
An example is depicted in Figure \ref{fig:footprints}.
The cardinality of the largest subset of $\mathcal{P}_u$ such that parities
are pairwise isolated corresponds to the availability of the symbol $u$.

Under the assumption that every systematic symbol is covered by at least a
logarithmic number of parities, Theorem \ref{thm:availability-of-the-code}
states that every systematic symbol has a logarithmic availability with high
probability.
\begin{theorem}
  \label{thm:availability-of-the-code}
  Assuming that every input symbol $u$ is covered by at least $m = rc\log
  k$ parity symbols created independently as described in Section
\ref{sec:our-construction}, for sufficiently large $k$, $\lambda>0$ and
$\alpha > 1 -
  \sqrt{\lambda/rc} - \frac{\log^4 k}{k}$,
  \begin{align}
    \text{Pr}\left( \exists u \text{ not $\left(\alpha m \right)
$-available}\right)
    \le \frac{1}{k^\lambda}.
  \end{align}
\end{theorem}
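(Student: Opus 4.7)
My plan is to fix a systematic symbol $u$, lower bound its availability by the size of a greedy isolated set, and apply a concentration inequality via stochastic domination. Fix $u$ and condition on $|\mathcal{P}_u|\ge m = rc\log k$, an event guaranteed for every $u$ simultaneously by Theorem \ref{thm:ub-prob-exist-u-few-parities} under an appropriate choice of constants. Because parities are generated independently in Section \ref{sec:our-construction}, the footprints $\mathcal{F}_{v_1},\ldots,\mathcal{F}_{v_m}$ of any $m$ distinct parities in $\mathcal{P}_u$ are mutually independent random subsets of $[k]\setminus\{u\}$, each of size at most $c\log k - 1$; the availability of $u$ is by definition the independence number of the conflict graph on these $m$ parities whose edges encode non-empty footprint overlap.

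I process $v_1,\ldots,v_m$ sequentially, maintaining a selected set $S_t$ and its aggregated footprint $F_t = \bigcup_{v\in S_t}\mathcal{F}_v$. Accept $v_t$ iff $\mathcal{F}_{v_t}\cap F_{t-1} = \emptyset$. By construction the final $S_m$ is pairwise isolated, so $Z := |S_m|$ lower bounds the availability of $u$. Since the accepted footprints are disjoint, $|F_{t-1}| \le |S_{t-1}|(c\log k - 1) \le mc\log k$ deterministically, and a union bound over the at most $c\log k$ uniform samples composing $\mathcal{F}_{v_t}$ yields
\[
P_t := \text{Pr}\bigl(v_t\text{ accepted}\bigm|v_1,\ldots,v_{t-1}\bigr) \ge p_* := 1 - \frac{rc^3\log^3 k}{k}.
\]
This uniform lower bound on the conditional mean of $Y_t := \mathbf{1}[v_t\text{ accepted}]$ permits a standard sequential coupling that produces i.i.d.\ $\text{Bernoulli}(p_*)$ variables $\tilde Y_t$ with $Y_t \ge \tilde Y_t$ almost surely; hence $Z$ stochastically dominates $\text{Binom}(m,p_*)$.

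Hoeffding's one-sided inequality applied to the dominating binomial, with deviation $t = \sqrt{(\lambda+1) m \log k / 2}$, gives $\text{Pr}(Z \le m p_* - t) \le \exp(-2t^2/m) = k^{-(\lambda+1)}$. Dividing by $m = rc\log k$ yields $Z/m \ge p_* - \sqrt{(\lambda+1)/(2rc)}$, and for $k$ sufficiently large the expected loss $1-p_* = rc^3\log^3 k/k$ is dominated by $\log^4 k/k$, while $\sqrt{(\lambda+1)/(2rc)} \le \sqrt{\lambda/rc}$ for $\lambda \ge 1$ (the small-$\lambda$ regime can be recovered by a Bernstein-type refinement exploiting the variance $mp_*(1-p_*)=o(1)$). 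A union bound over the $k$ systematic symbols then converts the conditional failure probability into the claimed $k^{-\lambda}$.

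The main obstacle is that the acceptance indicators $Y_t$ are genuinely dependent — each is determined by the entire preceding history through $F_{t-1}$ — so no direct Chernoff bound applies. The sequential stochastic-domination device side-steps this cleanly, at the cost of replacing the true conditional acceptance probability by its worst-case lower bound $p_*$, which is the source of the $\log^4 k/k$ expected-loss term; tightening the constants (in particular the factor $2(\lambda+1)$) requires a careful Bernstein accounting of the small per-step variance. A secondary technicality is verifying that conditioning on $|\mathcal{P}_u|\ge m$ preserves the mutual independence of the footprints, which follows immediately from the independent generation of parities: both the event $\{v\in\mathcal{P}_u\}$ and $\mathcal{F}_v$ depend only on the sampling of $v$ itself.
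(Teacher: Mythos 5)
Your proof is correct and takes a genuinely different route from the paper's. The paper treats $Z=\alpha(H_u)$ (the independence number of the conflict graph) as a function of the $m$ independent footprints, shows via a swap argument (their Lemma~\ref{lemma:size-of-independent-set-in-graph}) that this function has bounded differences with constant $1$, and applies McDiarmid's inequality; it then lower bounds $E[Z]$ by the expected number of \emph{degree-zero} vertices of $H_u$, i.e.\ parities isolated from \emph{all} others, yielding $E[Z]\ge m - m^2 d(k)^2/k$. You instead build the isolated set greedily, uniformly lower bound each conditional acceptance probability by $p_* = 1-md(k)^2/k$ (the same quantity, since $rc^3\log^3 k = md(k)^2$), couple to an i.i.d.\ $\mathrm{Binom}(m,p_*)$, and apply Hoeffding. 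The two approaches land on identical constants: your $mp_* - \sqrt{(\lambda+1)m\log k/2}$ matches the paper's $E[Z]$ lower bound minus the McDiarmid deviation, and both therefore reproduce the paper's condition $\alpha \le 1-\sqrt{(1+\lambda)/(2rc)} - rc^3\log^3 k/k$ (which, as you note, is looser than the theorem statement's $\sqrt{\lambda/rc}$ term exactly when $\lambda<1$ --- the paper's own derivation shares this same slack, so this is not a defect in your argument). What your route buys is elementarity and constructiveness: you avoid McDiarmid and exhibit an explicit greedy isolated set, whereas the paper's argument is a clean one-liner once bounded differences and the degree-zero counting bound are in place. Your identification of the dependence among the $Y_t$'s as the central obstacle, and the uniform $p_*$ bound (valid deterministically since $|F_{t-1}| \le (m-1)(d(k)-1) < md(k)$ almost surely) as the device that licenses the coupling, is exactly the right observation; one could make the conditional-footprint distribution argument slightly more careful (the non-$u$ edges of a parity conditioned to cover $u$ are uniform over $[k]\setminus\{u\}$, giving $|F_{t-1}|/(k-1)$ rather than $|F_{t-1}|/k$ per edge), but this does not affect the asymptotics.
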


\section{A graph perspective}
\label{sec:graph-perspective}
The randomized construction of our erasure code is naturally mapped to a
family of random bipartite graphs $G = (U,V,E)$ depicted in Figure
\ref{fig:bipartite_near_mds}.
The correspondence, established early in Section
\ref{sec:our-construction}, lays the foundation for all subsequent
analysis, but also provides an alternative viewpoint for our results as
purely structural properties of the random graphs, setting the coding
background aside.

Under the graph perspective, Theorem \ref{thm:main-theorem} states that a
randomly chosen subgraph of $G$ has a perfect matching.
First, consider a balanced random bipartite graph where $|U|=|V|=k$ and each vertex of $V$ is randomly connected to $d(k) = c \log k$ nodes in $U$. 
A classical result by Erd\H{o}s and Renyi~\cite{Erdos:RandomMatrices} shows that these graphs will have perfect matchings with high probability.
However, the graphs we consider are unbalanced, with $|U|=k$ and $|V|=k^\prime = (1+\epsilon)k$ vertices, for $\epsilon \ge 0$, like the one depicted in Figure \ref{fig:bipartite-graph-random-unbalanced-subgraph}.
Out of the $k^\prime$ vertices in $V$, $s$ vertices are special with degree $1$, corresponding to systematic symbols, and $k^\prime -s$ are connected to  $d(k) = c \log k$ vertices in $U$.
In that sense, if we set $s=0$ and $\epsilon=0$ we recover the classical result of~\cite{Erdos:RandomMatrices}.
Our additional analysis is required because our proof needs to hold for all values of $s$ ranging from $0$ up to $k-1$.
This corresponds to $s$ vertices in $U$ being trivially matched with those vertices in $V$ that have degree $1$, and the remaining $k-s$ vertices being matched via the random edges.
More formally, let $G_\mathcal{S} = (U, V_\mathcal{S}, E_\mathcal{S})$ be a
subgraph of $G$, where $V_\mathcal{S} \subseteq V$ is a subset of $k^\prime$
vertices, and $E_\mathcal{S} \subseteq E$ is the subset of edges incident to
$V_\mathcal{S}$.
Theorem \ref{thm:main-theorem} states that $G_\mathcal{S}$ has a perfect
matching, which for the unbalanced bipartite graph is a matching that
saturates all $k$ vertices in $U$.
In fact, this observation is a key component in the proof of Theorem
\ref{thm:main-theorem}.
The transition from the perfect matching of a subgraph to the rank of
a submatrix which finalizes the proof requires only that the random
coefficients are drawn from a large enough field.
Along the same lines, Theorem \ref{thm:main-theorem:converse} states that
randomly throwing $d(k) = \Omega(\log k)$ edges on the parity symbols are
necessary to guarantee that a vertex in $U$ is connected in $G_\mathcal{S}$
with high probability.

The vertices in $V$, the right hand side of $G$, have by construction degree
either equal to one or approximately equal to $d(k) = c \log(k)$ for some
$c>0$.
Theorem \ref{thm:ub-prob-exist-u-few-parities} states
that when $|V|$ increases linearly in $|U|$, the degree of the vertices in $U$
is concentrated around its expectation, which is proportional to $d(k)$.

\begin{figure}[htbp!]
 \centering
    \begin{tikzpicture}[scale=0.8,
    every node/.style={scale=0.8},->,>=stealth',shorten >=1pt,auto,node
distance=1.1cm, thick,
    symbol node/.style={circle, draw, minimum height=0.7cm},
    parity node/.style={rectangle, draw, minimum height=0.7cm, minimum
width=0.7cm},
    extra parity node/.style={rectangle, draw, densely dotted, minimum
height=0.7cm, minimum width=0.7cm}
  ]
    \node[parity node] (V1) at (0, 0) {$v_1$};
    \node[parity node] (V2) at (2, -1) {$v_2$};
    \node[parity node] (V3) at (3, 1.5) {$v_3$};
    \node[extra parity node] (extra1) at (5, -0.5) {$+$};
    \node[parity node] (V5) at (3.3, -2.5) {$v_M$};
    \node[extra parity node] (extra2) at (0.5, -2) {$+$};

    \draw [-] (V1) -- (V2);
    \draw [-] (V3) -- (V2);
    \draw [-, densely dashed] (V1) -- (extra2);
    \draw [-, densely dashed] (V3) -- (extra1);

    \path [name path=teleso, fill=gray!10,opacity=0.20,line
  width=1,draw=black] plot [smooth cycle] coordinates
  { ($(V1)+(0.6, -0.6)$)
  ($(V1)+(-0.5, -0.5)$)
  ($(V1)+(-0.5,0.5)$)
  ($(V3)+(0.4, 0.7)$)
  ($(V5)+(1,2.5)$)
  ($(V5)+(0.6,-0.5)$)
  ($(V5)+(-0.7,-0.2)$)
  ($(V2)+(0.6,1)$)};
  \end{tikzpicture}
  \caption{
  Graph $H_u$ with vertices corresponding to the parities covering $u$.
  Two parities are connected if their footprints overlap.
  This example is in accordance with that of Figure \ref{fig:footprints}.
  }
  \label{figure:parity-connectivity-graph}
\end{figure}
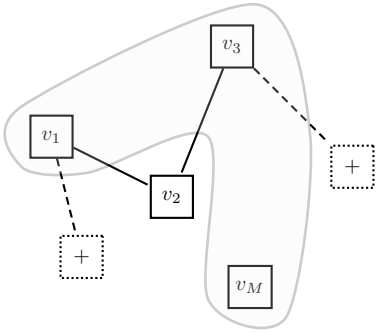
Finally, the availability of a symbol $u$ can be mapped to the independence
number of a random graph $H_u = (\mathcal{P}_u, E_u)$.
$\mathcal{P}_u$ denotes the set of parities covering $u$, or equivalently the
vertices of $V$ in the neighborhood of $u$.
The set of edges $E_u$ is constructed as follows:
for $v_i, v_j \in \mathcal{P}_u$, $(v_i, v_j) \in E_u$ if and only if the
footprints of $v_i$ and $v_j$ overlap.
The availability of $u$ is the cardinality of the maximum independent set in
$H_u$, which is shown to be at least a constant fraction of $|\mathcal{P}_u|$,
 and hence logarithmic in $k$, with high probability.
Theorem \ref{thm:availability-of-the-code} states that this
property holds simultaneously for all $u \in U$.

\section{Simulations}
\begin{figure}[htbp!]
   \centering
   \includegraphics[width=0.49\textwidth]{./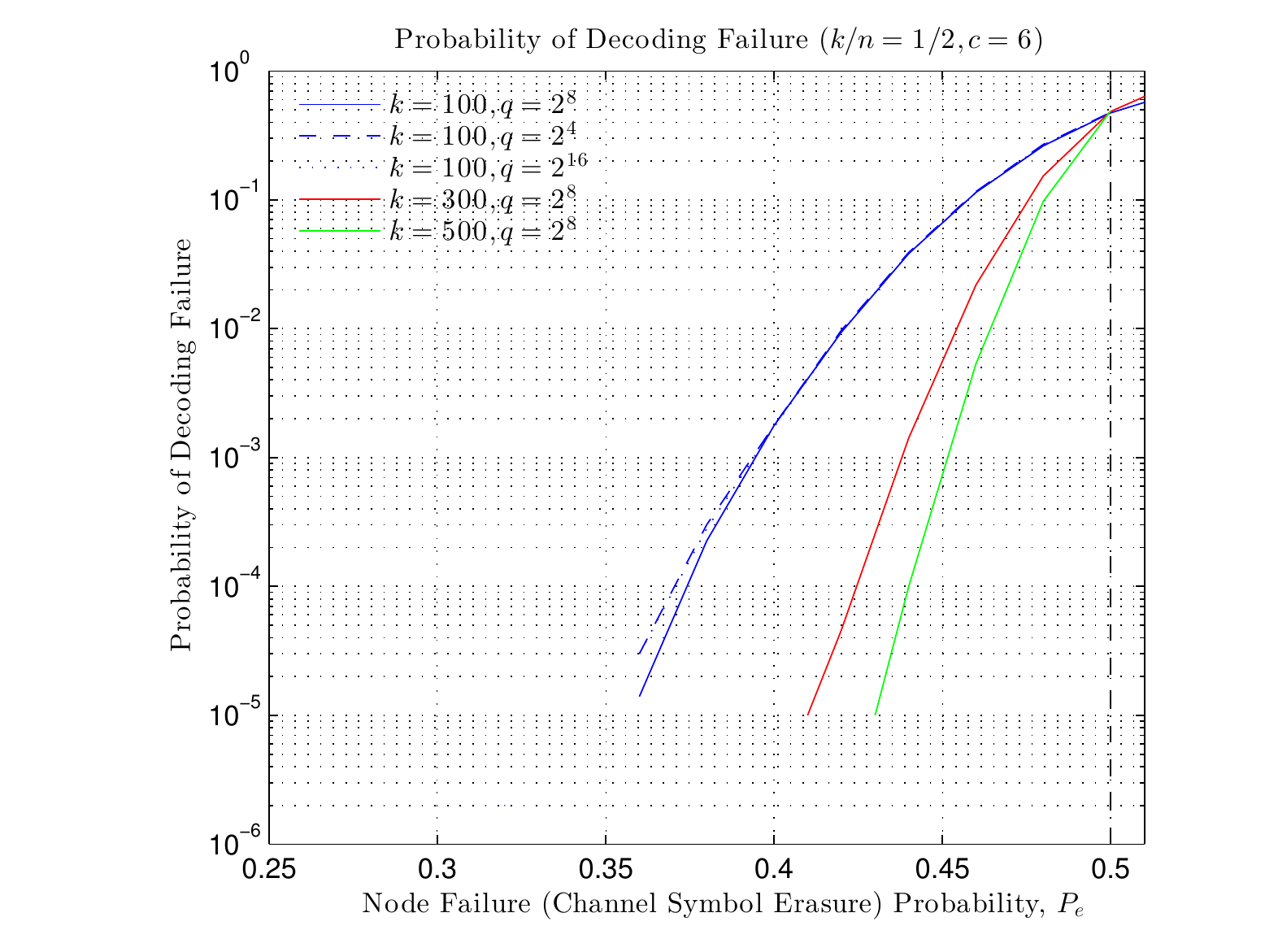}
   \caption{
      Probability of decoding failure versus the probability of symbol erasure, $P_e$.
      The probability is estimated over multiple $10^3$ randomly generated code instances with $k$ input symbols and rate $1/2$, and $10^3$ independent trials per instance .
      The degree of parity symbols is equal to $d(k) = \lceil c\log(k) \rceil$, with $c = 6$.
      A fixed value of $P_e$, corresponds to an expected decoding overhead $\overline{\epsilon} = 1 - 2P_e$.
   }
   \label{fig:simulations-B}
\end{figure}
In this section we experimentally evaluate the probability that decoding fails when a randomly selected subset of encoded symbols is available at the decoder.
Since our codes are rateless, we can set any target desired rate and examine the performance under random erasures.
In this experiment we set the rate equal to $1/2$; the generator matrix comprises the $k$ columns of the identity matrix and $k$ parity generating columns, constructed randomly and independently as described in Section \ref{sec:our-construction}.
The degree of the parities is upper bounded by $d(k) = \lceil c\log(k) \rceil$, where the pre-log factor is arbitrarily set to a small constant value.
Decoding fails exactly when the columns  corresponding to the encoded symbols available to the decoder form a matrix whose rank is strictly less than $k$.

A first series of experiments considers a sequence of random trials in which individual encoded symbol are erased in independently with probability $P_e$.
The ensemble of surviving symbols is available to the decoder.
This corresponds to the transmission through an erasure channel with erasure probability $P_e$.
The cardinality of the decoding set is a binomial random variable with expected value equal to  $(1-P_e)2k$, which amounts to an expected decoding overhead $\overline{\epsilon} = 1 - 2 P_e$.
A total of $10^3$ code instances are generated and each is subjected to  $10^3$ trials per value of $P_e$.
Fig.~\ref{fig:simulations-B} depicts the probability of decoding failure versus the channel erasure probability, $P_e$.
The experiment is repeated for three values of $k$: $k=100$, $300$, and $500$.
The field size is set to $q=2^8$, \textit{i.e.}, a single byte per symbol, for all values of $k$.

In a second series of experiments, the decoding set of cardinality equal to $k^\prime = \lceil (1+\epsilon)k \rceil$ is selected uniformly at random in each trial from the set of $2k$ encoded symbols.
Fig.~\ref{fig:simulations-B2} depicts the estimated probability of decoding failure versus the decoding overhead $\epsilon$.
\begin{figure}[htbp!]
   \centering
   \includegraphics[width=0.49\textwidth]{./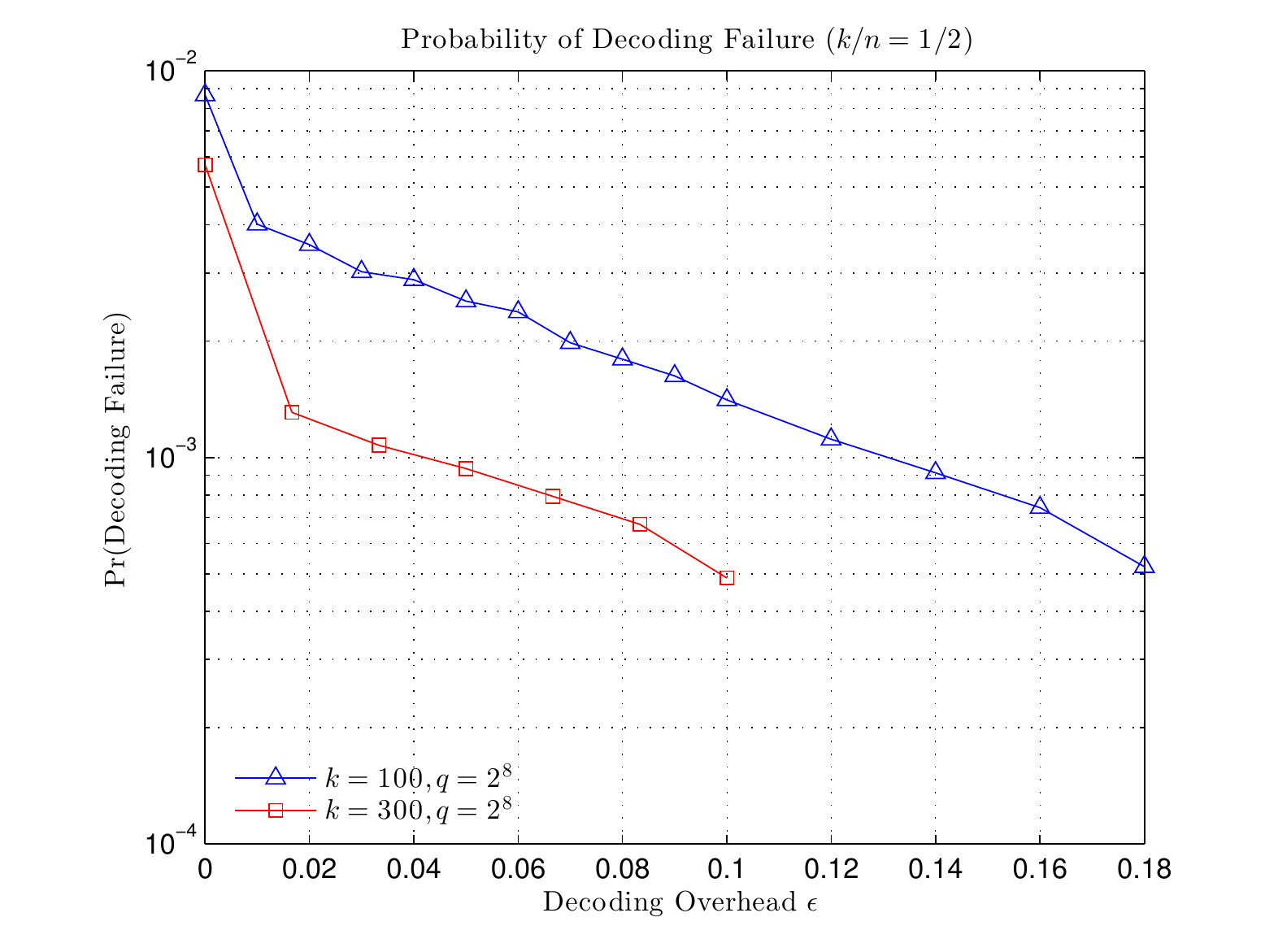}
   \caption{
      Probability of decoding failure versus the decoding overhead $\epsilon$.
      The probability is evaluated over multiple $10^3$ randomly generated code instances with $k$ input symbols and rate $1/2$, and $10^3$ independent trials per instance.
      The degree of parity symbols is equal to $d(k) = \lceil c\log(k) \rceil$, with $c = 4$.
   }
   \label{fig:simulations-B2}
\end{figure}

\section{Analysis and Proofs}
\subsection{Proof of Theorem \ref{thm:main-theorem}}
Theorem \ref{thm:main-theorem} states that when $\mathbf{G}$ is constructed as
described in section \ref{sec:our-construction}, a randomly selected
$k\times k^\prime$ submatrix $\mathbf{G}_{\mathcal{S}}$ is full rank w.h.p.
Equivalently, there exists a set of indices $\mathcal{K} \subset \mathcal{S}$
with $|\mathcal{K}|=k$ such that $k \times k$ submatrix
$\mathbf{G}_\mathcal{K}$ is nonsingular.
More formally,
\begin{equation}
 \text{Pr}\left( \exists\;
 \mathcal{K} \subseteq \mathcal{S}
:\text{det}(\mathbf{G}_\mathcal{K}) \neq 0 \right)
=1-\left(\frac{k}{q}+o(1)\right).
\label{eq:main_thm_prob_bound}
\end{equation}
In the following, we exploit a connection between determinants and perfect
matchings (P.M.'s) in bipartite graphs.
In section \ref{sec:our-construction}, we showed the correspondence of
the randomly constructed matrix $\mathbf{G}$ to an unbalanced bipartite graph
$G=(U,V,E)$.
The submatrix $\mathbf{G}_{\mathcal{S}}$ corresponds to a subgraph
$G_\mathcal{S}=(U,V_\mathcal{S},E_\mathcal{S})$,
depicted in Figure \ref{fig:bipartite-graph-random-unbalanced-subgraph}, where
$V_\mathcal{S}$ is a subset of $k^\prime$ vertices of
$V$, and $E_\mathcal{S}$ is a subset of the edges incident to vertices in
$V_\mathcal{S}$.
Similarly, a $k\times k$ submatrix $\mathbf{G}_{\mathcal{K}}$ of
$\mathbf{G}_{\mathcal{S}}$ corresponds to a smaller, balanced bipartite graph,
$G_{\mathcal{K}}=(U, V_{\mathcal{K}}, E_{\mathcal{K}})$, with
$k$ vertices on each side.
\begin{figure}[htbp!]
  \centering
  \begin{tikzpicture}[scale=0.9, ->,>=stealth',shorten >=1pt,auto,node
distance=1cm,
    thick,
     every node/.style={scale=0.9},
    symbol node/.style={circle, draw, minimum height=0.7cm},
    parity node/.style={rectangle, draw, minimum height=0.7cm, minimum
    width=0.7cm}]

    \node[symbol node] (U1) at (0,0) {};
    \node            (U2) [below of=U1] {$\vdots$};
    \node[symbol node] (U3) [below of=U2] {};
    \node[symbol node] (U4) [below of=U3] {};
    \node            (U5) [below of=U4] {$\vdots$};
    \node[symbol node] (U6) [below of=U5] {};

    \node[symbol node] (V1) at (3, 0.5) {};
    \node            (V2) [below of=V1] {$\vdots$};
    \node[symbol node] (V3) [below of=V2] {};
    \node[parity node] (V5) [below of=V3] {$+$};
    \node[node distance=1.5cm]            (V6) [below of=V5] {$\vdots$};
    \node[parity node,node distance=1.5cm] (V7) [below of=V6] {$+$};

    \draw (V1) edge [bend right, out=-30, in=-150] (U1);
    \draw (V3) edge [bend right, out=-30, in=-150] (U3);

    \draw (V5) edge [out=170, in=0]  (1.5, -2);
    \draw (V5) edge [out=180, in=22.5] (1.5, -2.5);
    \draw (V5) edge [out=190, in=45] (1.5, -3);

    \draw (V7) edge [out=160, in=-45] (1.5, -4.5);
    \draw (V7) edge [out=180, in=315, looseness=1] (1.5, -5);
    \draw (V7) edge [out=200, in=315, looseness=1] (1.5, -5.5);

    \draw [|<->|, gray] (-0.7,-2.35) -- (-0.7,0.35) node [midway, above, gray,
    sloped]{$s$};
    \draw [|<->|, gray] (-0.7,-5.35) -- (-0.7,-2.65) node [midway, above, gray,
    sloped]{$k-s$};

    \draw [|<->|, gray] (3.7, 0.85) -- (3.7,-1.85) node [midway, above, gray,
    sloped]{$s$};
    \draw [|<->|, gray] (3.7,-2.25) -- (3.7,-5.85) node [midway, above, gray,
    sloped]{$(1+\epsilon)k-s$};

    \path (-2.7, 0) (U1);

    \node[fit=(V1)(V7)](groupVs){};
    \draw[-, line width=.5pt,black,decorate,decoration={amplitude=7pt,brace}]
    ($(groupVs.north east)+(1, 0)$) -- ($(groupVs.south east)+(1, 0)$);
    \node[anchor=west, text width = 2cm] (vsname) at ($(V5)+(1.7, 0)$)
    {$V_{\mathcal{S}} \subseteq V$, $\left| V_{\mathcal{S}} \right| =
    k^\prime$.};
  \end{tikzpicture}
  \caption{%
  Bipartite Graph $G_\mathcal{S}=(U,V_\mathcal{S},
E_\mathcal{S})$ corresponding to $\mathbf{G}_\mathcal{S}$.
$V_\mathcal{S} \subseteq V$ consists of $k^\prime = (1+\epsilon)k$ vertices.
Out of them $s$ correspond systematic symbols, and $k^\prime - s$ to parities.
The graph has a perfect matching, \textit{i.e.} a matching saturating all $k$
vertices in $U$, with high probability, for any value of $s$.
  }
  \label{fig:bipartite-graph-random-unbalanced-subgraph}
\end{figure}
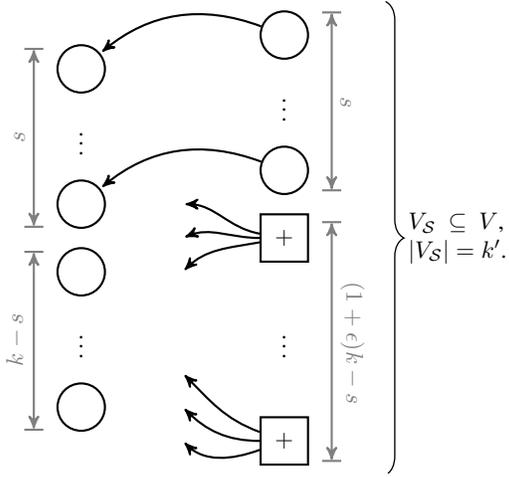

$\mathbf{G}_{\mathcal{K}}$ is closely related to the Edmond's matrix
$\mathbf{A}_{\mathcal{K}}$ of the
corresponding bipartite graph $G_{\mathcal{K}}$.
By definition, the $(i,j)$-th entry of $\mathbf{A}_{\mathcal{K}}$ is
\begin{flalign}
 A_{\mathcal{K}}(i,j) = \left\{ \begin{array}{ll}
                    a_{i,j}, & \text{if } (u_i, v_j) \in E_{\mathcal{K}}\\
		    0, 	& \text{if } (u_i, v_j) \notin E_{\mathcal{K}}
                   \end{array}
\right. ,
\end{flalign}
where $u_i \in U, v_j \in V_{\mathcal{K}}$, and $a_{i,j}$'s are
\textit{indeterminates}.
\begin{lemma}
  \label{lemma:connection-between-determinant-and-pm}
  The determinant of $\mathbf{A}_{\mathcal{K}}$ is nonzero if and only if
there
  exists a perfect matching in
  $G_{\mathcal{K}}$, \textit{i.e.},
  \begin{equation}
  \text{det}(\mathbf{A}_{\mathcal{K}})\neq 0  \quad \Leftrightarrow \quad
  \exists \text{ P.M. in } G_{\mathcal{K}}.
  \label{eq:equivalence-between-det-of-edmonds-and-pm-in-graph}
  \end{equation}
\end{lemma}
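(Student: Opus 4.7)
The plan is to prove the equivalence using the Leibniz formula for the determinant, which naturally encodes a sum over permutations of $\{1,\dots,k\}$, and to exploit the fact that the entries $a_{i,j}$ are treated as formal indeterminates.

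First, I would expand
\begin{equation*}
\det(\mathbf{A}_{\mathcal{K}}) = \sum_{\sigma \in S_k} \text{sgn}(\sigma)\prod_{i=1}^{k} A_{\mathcal{K}}(i,\sigma(i)).
\end{equation*}
By the definition of $\mathbf{A}_{\mathcal{K}}$, the product $\prod_{i} A_{\mathcal{K}}(i,\sigma(i))$ is nonzero (equal to $\prod_i a_{i,\sigma(i)}$) if and only if $(u_i, v_{\sigma(i)}) \in E_{\mathcal{K}}$ for every $i$, i.e., if and only if $\sigma$ encodes a perfect matching in $G_{\mathcal{K}}$. Hence the only terms that contribute to the Leibniz sum are indexed by perfect matchings of $G_{\mathcal{K}}$.

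For the ($\Leftarrow$) direction, suppose $G_{\mathcal{K}}$ admits a perfect matching corresponding to some permutation $\sigma^*$. Its contribution to the expansion is the monomial $\text{sgn}(\sigma^*)\prod_{i} a_{i,\sigma^*(i)}$. Since the $a_{i,j}$ are indeterminates, the key observation is that this monomial is uniquely determined by the set $\{(i,\sigma^*(i))\}$, and any distinct permutation $\sigma \neq \sigma^*$ yields a monomial $\prod_i a_{i,\sigma(i)}$ involving a different multiset of indeterminates (some $a_{i,j}$ appearing in one but not the other). Thus the monomial of $\sigma^*$ cannot be canceled by any other term in the Leibniz expansion, so its coefficient $\text{sgn}(\sigma^*)=\pm 1$ survives, proving $\det(\mathbf{A}_{\mathcal{K}})\neq 0$ as a polynomial. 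The ($\Rightarrow$) direction is immediate: if $\det(\mathbf{A}_{\mathcal{K}})\neq 0$, at least one term in the Leibniz sum is nonzero, and by the observation above that term corresponds to a perfect matching.

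The main subtlety — really the only place one has to be careful — is the no-cancellation argument in the ($\Leftarrow$) direction. This is exactly where it matters that the $a_{i,j}$ are \emph{indeterminates} rather than fixed scalars, since over a specific field the distinct monomials could in principle sum to zero for an unlucky assignment. Once distinctness of the monomials is articulated (each permutation $\sigma$ is uniquely recoverable from the set of indices of the indeterminates appearing in $\prod_i a_{i,\sigma(i)}$), the proof is complete. This polynomial nonvanishing will later be combined with the Schwartz–Zippel lemma (or a direct union bound over the $k$ factors of the field) to transfer the matching-to-nonsingularity statement to the randomly instantiated coefficients drawn from $\mathbb{F}_q$, accounting for the $k/q$ term in Theorem~\ref{thm:main-theorem}.
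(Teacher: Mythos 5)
Your proof is correct and follows essentially the same route as the paper: both expand the determinant via the Leibniz formula, identify nonvanishing terms with perfect matchings of $G_{\mathcal{K}}$, and observe that a surviving permutation's monomial cannot be canceled because distinct permutations produce distinct multisets of indeterminates. Your write-up is a touch more explicit about why no cancellation can occur, but the argument is the same.
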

\noindent
However, $\mathbf{G}_{\mathcal{K}}$ is not an actual Edmond's matrix;
it is obtained from $\mathbf{A}_{\mathcal{K}}$ substituting the indeterminates
with randomly drawn elements of a finite field $\mathbb{F}_q$.
There are two substantially different cases in which
$\text{det}(\mathbf{G}_{\mathcal{K}}) =0$:
\begin{itemize}
 \item
 The determinant polynomial $\text{det}(\mathbf{A}_{\mathcal{K}})$ is
identically zero,  which by
Lemma \ref{lemma:connection-between-determinant-and-pm}
occurs if and only if $G_{\mathcal{K}}$ has no perfect matching, or
 \item
  it is not identically zero, \textit{i.e.}, $G_{\mathcal{K}}$ has a
perfect matching, but the selected coefficients correspond to a root of the
polynomial.
\end{itemize}
The equivalence property in
\eqref{eq:equivalence-between-det-of-edmonds-and-pm-in-graph} is not inherited
by $\mathbf{G}_{\mathcal{K}}$.
In contrast to the use of indeterminates, an unfortunate selection of the
random coefficients in $\mathbf{G}_{\mathcal{K}}$ can lead to zero determinant
even when $G_{\mathcal{K}}$ has a perfect matching.
However, if the coefficients are drawn from a sufficiently large field, the
probability of this event can be driven arbitrarily low.
More concretely, if $G_{\mathcal{K}}$ has a perfect matching $M$, then the
determinant of $\mathbf{A}_{\mathcal{K}}$ is a nonzero polynomial of degree
exactly $k$.
The probability that a random assignment of coefficients from
$\mathbb{F}_q$ yields a zero determinant can be upper bounded by
$k/q$, using the Schwartz-Zippel Lemma \cite{Motwani:2010}.
In summary,
\begin{align}
 \text{Pr}
 \left(
  \text{det}(\mathbf{G}_{\mathcal{K}})=0 \;|\; \exists
  \text{ P.M. in } G_{\mathcal{K}}
 \right)
 \le \frac{k}{q}.
 \label{eq:up-SZ-det-of-Gk}
\end{align}

The $k \times k^\prime $ matrix $\mathbf{G}_\mathcal{S}$ has
$\binom{k^\prime}{k}$ square submatrices.
For successful decoding it suffices that at least one such submatrix
$\hat{\mathbf{G}}_{\mathcal{K}}$ is nonsingular, \textit{i.e.}, has nonzero
determinant.
In light of \eqref{eq:up-SZ-det-of-Gk}, we ask instead
whether there exists a subgraph $\hat{G}_{\mathcal{K}}$ that has a P.M..
Observe that a P.M. in a subgraph $\hat{G}_{\mathcal{K}}$ is
also a P.M. in the larger graph $G_\mathcal{S}$.
Conversely, if $G_\mathcal{S}$ has a perfect matching $M$,
\textit{i.e.}, a matching saturating all vertices in $U$, then such a $k \times
k$ subgraph $\hat{G}_{\mathcal{K}}$ exists: its vertices are the endpoints of
the edges in $M$.
According to the following Lemma, $G_\mathcal{S}$ has a perfect matching $M$
with high probability.
\begin{lemma}
  \label{lemma:one_sided_pm_nonexistence}
  Consider the bipartite graph
  $G_\mathcal{S}=(U,V_\mathcal{S},E_\mathcal{S})$
  corresponding to the submatrix  $\mathbf{G}_{\mathcal{S}}$ of  $\mathbf{G}$.
  $V_\mathcal{S}$ contains any number $s$ of vertices with degree $1$ connected
to distinct vertices in $U$, and $(1\nobreak+\nobreak\epsilon)k-s$ vertices
that have randomly thrown $d(k) = c\log(k)$ edges as described in Section
\ref{sec:our-construction}.
  For appropriate constant $c \propto \rho/\epsilon $,
  \begin{align}
    \text{Pr}\left(\nexists \text{ P.M. in } G_\mathcal{S}
\right) \le 1/k^{\rho},
  \label{eq:upper-bound-on-prob-that-there-is-no-PM-in-G_S}
  \end{align}
for $\rho > 0$.
\end{lemma}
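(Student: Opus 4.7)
The plan is to apply Hall's marriage theorem, after first reducing to a cleaner matching problem. Because each of the $s$ degree-one systematic vertices in $V_\mathcal{S}$ can only be matched to its unique $U$-neighbor in $S$, and no vertex in $U_0 := U \setminus S$ (with $|U_0| = k - s$) has any edges to systematic $V_\mathcal{S}$-vertices, $G_\mathcal{S}$ has a matching saturating $U$ if and only if the bipartite subgraph between $U_0$ and the $N := (1+\epsilon)k - s$ parity vertices in $V_\mathcal{S}$ has a matching saturating $U_0$. Hall's condition then states that for every $A \subseteq U_0$, the set $N_P(A)$ of parity vertices incident to $A$ has size at least $|A|$.

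Next I would union-bound over Hall-violating sets. Each parity vertex independently throws $c\log k$ edges uniformly into $U$ with replacement, so for a fixed $A$ with $|A| = a$ a given parity vertex misses $A$ with probability $q_a := (1 - a/k)^{c \log k}$, independently across parity vertices. The event $|N_P(A)| < a$ is equivalent to at least $N - a + 1$ parity vertices all missing $A$; a union bound over such subsets gives
\[ \Pr\bigl[\,|N_P(A)| \le a - 1\,\bigr] \;\le\; \binom{N}{a-1}\, q_a^{\,N-a+1}. \]
Multiplying by the $\binom{k-s}{a}$ choices for $A$ and summing over $1 \le a \le k - s$ produces the desired upper bound on $\Pr[\nexists\,\text{P.M.}]$.

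To finish, I would combine the crude inequalities $\binom{k-s}{a} \le k^a$ and $\binom{N}{a-1} \le ((1+\epsilon)k)^{a-1}$ with the two key estimates $q_a \le e^{-ac(\log k)/k}$ (from $1 - x \le e^{-x}$) and $N - a + 1 \ge \epsilon k$ (valid for every $a \le k - s$ since $N = (k-s) + \epsilon k$). These give $q_a^{N-a+1} \le k^{-ac\epsilon}$, so each summand is at most
\[ (1+\epsilon)^{a-1}\, k^{\,a(2 - c\epsilon) - 1}. \]
Choosing $c$ so that $c\epsilon$ exceeds $2$ by a constant (the relation $c \propto \rho/\epsilon$ in the lemma, and certainly the value $c = (8 + \rho + 2\epsilon)/\epsilon$ of Theorem~\ref{thm:main-theorem}, is more than enough) makes the sequence geometrically decreasing in $a$, so the whole sum is dominated up to a constant by its $a=1$ term $k^{1 - c\epsilon}$, which is $\le k^{-\rho}$ whenever $c\epsilon \ge 1 + \rho$. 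The most delicate point is that the bound $N - a + 1 \ge \epsilon k$ is tight at $a = k-s$; this ``last unmatched $U_0$-vertices'' corner is what forces the linear dependence of $c$ on $\rho/\epsilon$ and must be verified uniformly in $s$. Fortunately, the bound above is already uniform in $s$ (the parameter $s$ appears only through an upper limit on the summation), so no separate case analysis is required.
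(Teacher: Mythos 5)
Your proposal reproduces the paper's core argument exactly: reduce to Hall's condition for the $k-s$ vertices of $U$ not covered by systematic columns against the $N = (1+\epsilon)k - s$ parity vertices, union-bound over contracting sets of each size $a$, and estimate the per-set probability by a second union bound, giving the summand $\binom{k-s}{a}\binom{N}{a-1}(1-a/k)^{c\log k\,(N-a+1)}$ --- term-by-term identical to the paper's (with $i\leftrightarrow a$). Where you diverge is the asymptotic analysis that closes the sum. The paper applies the entropy bound $\binom{n}{k}\le 2^{nH(k/n)}$, replaces the sum by $(k-s)\max_i$, and then carries out a three-part estimate of the ratio $N/D$, arriving at the fairly loose constant $c=(8+\rho+2\epsilon)/\epsilon$. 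You instead use the crude bounds $\binom{k-s}{a}\le k^a$, $\binom{N}{a-1}\le((1+\epsilon)k)^{a-1}$, and $1-x\le e^{-x}$ together with the uniform-in-$s$ estimate $N-a+1\ge\epsilon k$, showing each summand is at most $(1+\epsilon)^{a-1}k^{a(2-c\epsilon)-1}$; the sum is then geometric and dominated by its $a=1$ term once $c\epsilon>2$. This is more elementary, avoids the entropy machinery, and yields a tighter threshold; your remark that uniformity in $s$ comes for free (because $s$ only truncates the summation range) is the right way to dispose of the case analysis the paper's phrasing hints at. One small imprecision: writing that $c\epsilon\ge 1+\rho$ suffices omits that you also need $c\epsilon>2$ by a fixed margin so that $(1+\epsilon)k^{-(c\epsilon-2)}$ stays bounded away from $1$ and the geometric tail collapses; a single condition such as $c\epsilon\ge 2+\rho$ covers both and still improves on the paper's constant.
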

\noindent
The probability of equation \eqref{eq:main_thm_prob_bound} can be written as
\begin{flalign}
&\text{Pr}(\exists \; \mathcal{K} \subseteq \mathcal{S} :
\text{det}(\mathbf{G}_{\mathcal{K}})\neq 0 )  \nonumber\\
&= 1-\text{Pr}(\nexists  \; \mathcal{K} \subseteq \mathcal{S} :
\text{det}(\mathbf{G}_{\mathcal{K}})\neq 0 ) \nonumber\\
&=1- \left[ \underbrace{\text{Pr}(\nexists \; \mathcal{K} \subseteq \mathcal{S}
:
\text{det}(\mathbf{G}_{\mathcal{K}})\neq 0 | \exists M )}_{\alpha}\cdot
\text{Pr}( \exists M )\right. \nonumber \\
  &\qquad \;\;\;\left.+\underbrace{\text{Pr}(\nexists \; \mathcal{K} \subseteq
\mathcal{S} :
\text{det}(\mathbf{G}_{\mathcal{K}})\neq 0 | \nexists M )}_{\beta}\cdot
\text{Pr}( \nexists M ) \right]
\label{eq:probability_proof_1}.
\end{flalign}
As argued in the previous paragraph, the existence of $M$ implies the
existence of a subgraph $\hat{G}_{\mathcal{K}}$ that has a perfect matching.
The probability $\text{Pr}(\nexists \;
\mathbf{G}_{\mathcal{K}} : \text{det}(\mathbf{G}_{\mathcal{K}})\neq 0 \;|\;
\exists M )$, that all submatrices $\mathbf{G}_{\mathcal{K}}$ are singular
despite the existence of $M$, is upper bounded by the probability that
$\text{det}(\hat{\mathbf{G}}_{\mathcal{K}})= 0$, which was in turn upper
bounded in \eqref{eq:up-SZ-det-of-Gk} by $k/q$.
Hence, $\alpha \le k/q$.
On the other hand, nonexistence of a perfect matching in $G_{\mathcal{S}}$,
implies that no submatrix $\mathbf{G}_{\mathcal{K}}$ can have nonzero
determinant, hence, $\beta=1$.
Continuing from \eqref{eq:probability_proof_1}, we have:
\begin{flalign}
\text{Pr}\left(\exists \; \mathcal{K} \right. &
\left. \subseteq \mathcal{S}: \text{det}\left(\mathbf{G}_{\mathcal{K}}\right)\neq 0 \right) \nonumber \\
 & \ge
1-\left[ \frac{k}{q}\text{Pr}\left( \exists M \right)+ \text{Pr}\left( \nexists
M \right) \right] \nonumber\\
 & =
1-\frac{k}{q}\left(1-\text{Pr}\left( \nexists M \right)\right)-\text{Pr}\left(
\nexists M \right)\nonumber \\
 &= 
 1-\frac{k}{q}-\left(1-\frac{k}{q}\right)\text{Pr}( \nexists M ).
\label{eq:ub-on-prob-no-sing-square-submatrix}
\end{flalign}
Finally, satisfying the conditions in Lemma
\ref{lemma:one_sided_pm_nonexistence}, we can guarantee that $\text{Pr}(
\nexists M ) \le k^{-\rho}$, for $\rho > 1$.
Applying the bound on the right hand side of  
\eqref{eq:ub-on-prob-no-sing-square-submatrix}, we
obtain the desired result in \eqref{eq:main_thm_prob_bound}.
To complete the proof, it remains to prove Lemmata
\ref{lemma:connection-between-determinant-and-pm} and
\ref{lemma:one_sided_pm_nonexistence}.

\subsubsection{Proof of Lemma \ref{lemma:connection-between-determinant-and-pm}
- Connection between determinants and perfect matchings}

We use the following expression for the determinant:
\begin{equation}
 \text{det}(\mathbf{A}_{\mathcal{K}}) = \sum_{\pi \in S_n}{\text{sgn}(\pi)}
\prod_{i=1}^{n}\mathbf{A}_{\mathcal{K}}(i, \pi(i)) \label{eq:determinant},
\end{equation}
where $S_n$ is the set of all permutations on $\{1, \hdots, n\}$, and
$\text{sgn}(\pi)$ is the sign of permutation $\pi$.
There is a one-to-one correspondence between a permutation $\pi \in S_n$ and a
candidate perfect matching $\left\{ (u_1, v_{\pi(1)}), \hdots, (u_n, v_{\pi(n)})
\right\}$ in $G_{\mathcal{K}}$.
Note that if the candidate P.M. does not exist in $G_{\mathcal{K}}$,
\textit{i.e.}, some edge $(u_i, v_{\pi(i)}) \notin E_{\mathcal{K}}$ then the
term corresponding to $\pi$ in the summation is $0$.
Therefore,
\begin{equation}
 \text{det}(\mathbf{A}_{\mathcal{K}}) = \sum_{\pi \in
\mathcal{P}}{\text{sgn}(\pi)} \prod_{i=1}^{n}a_{i, \pi(i)},
\end{equation}
where $\mathcal{P}$ is the set of perfect matchings in
$G_{\mathcal{K}}$.
If $\mathcal{P}=\emptyset$, \textit{i.e.}, if $G_{\mathcal{K}}$ has no P.M.,
every term in the sum is equal to zero.
If on the contrary $G_{\mathcal{K}}$ has a P.M., there exists a $\hat{\pi} \in
\mathcal{P}$, and hence the term corresponding to $\hat{\pi}$ is
$\prod_{i=1}^{n}a_{i,
\hat{\pi}(i)} \neq 0$.
Additionally, there is no other term in the summation containing the exact same
set of variables and this term cannot be canceled out.
In this case, $\text{det}(\mathbf{A}_{\mathcal{K}}) \neq 0$, which concludes
the proof of the lemma. $\qed$
\subsubsection{Proof of Lemma \ref{lemma:one_sided_pm_nonexistence}: Existence
of Perfect Matching in the random subgraph}
\label{sec:ospmorubg}
We want to establish an upper bound on the probability that there is no
\textit{perfect matching (P.M.)} between $U$ and $V_\mathcal{S}$ in the random
$k \times k^\prime$ bipartite graph $G_\mathcal{S}$.
In fact, we want to show that $d(k)=O(\log{k})$ in the construction of the
bipartite graph, suffices to achieve an upper bound asymptotically decaying with
a rate $1/\text{poly}(k)$.

Let $V_{s}$, $0\le |V_s| \le k$, denote the subset of $V_\mathcal{S}$
corresponding to systematic
encoded symbols.
If a P.M. exists, we may assume that all symbols in $V_s$ participate in it.
To see that, consider a vertex $v_i \in V_s$, connected to a symbol $u_i \in
U$, and assume that $(v_i, u_i)$ is \textit{not} included in the P.M.
Then, $u_i$ must be paired with some vertex $v_j \notin V_s$, since $v_i$ was
the only systematic symbol connected to $u_i$.
In addition, $u_i$ is the only symbol adjacent to $v_i$, hence $v_i$ does not
participate in the P.M.
Given such a P.M., we can construct another one substituting $(u_i, v_j)$
with $(u_i, v_i)$.
Therefore, without loss of generality, we may assume that all vertices in
$V_{s}$ participate in the P.M.

Since $k^\prime=(1+\epsilon)k>k$, $V_\mathcal{S}$ contains a nonempty
subset corresponding to parity symbols, denoted by
$\overline{V}_{s}=V_\mathcal{S}\backslash V_s$.
Let $U_{s}$ denote the subsets of $U$ matched with vertices in $V_s$, and
$\overline{U}_{s}=U\backslash U_s$ the remaining vertices.
A P.M. between  $U$ and $V_\mathcal{S}$ exists if and only if a P.M.
exists between $\overline{U}_{s}$ and $\overline{V}_{s}$.

The probability that a P.M. does not exist equals the probability that there
exists  a \textit{contracting} set of vertices in $\overline{U}_{s}$,
\textit{i.e.}, a subset of $\overline{U}_{s}$ with a joint neighborhood
smaller than its cardinality.
Let $|V_{s}| = s$ and $|\overline{V}_{s}|=k^\prime-s$.
Denote by $E_i$ the event that there exists a set of $i$ vertices in
$\overline{U}_{s}$ that 
\textit{contracts}, \textit{i.e.},  has at most $i-1$ neighbors in
$\overline{V}_{s}$.
This is equivalent to at least  $k^\prime-s-(i-1)$ vertices in
$\overline{V}_{s}$ being
only
adjacent to vertices in $\overline{U}_{s}$ other than the $i$ vertices of
interest. Then,
\begin{flalign}
     &\text{Pr}\left(\nexists P.M. \text{ in } G_{\mathcal{S}} \right)
=  \text{Pr}(\bigcup_{i=1}^{|\overline{U}_{s}|}E_i )
\le  \sum_{i=1}^{|\overline{U}_{s}|}\text{Pr}(E_i )\nonumber\\
&\le
\resizebox{0.9\hsize}{!}{$
\displaystyle
\sum_{i=1}^{|\overline{U}_{s}|}\binom{|\overline{U}_{s}|}{i}\binom{|\overline{V}
_{s}|}{| \overline{V}_{s} |-(i-1)}\left(
\frac{k-i}{k}\right)^{d(k) \left(|\overline{V}_{s}|-(i-1)\right)}
$}\nonumber \\
&=
\resizebox{0.9\hsize}{!}{$
\displaystyle
\sum_{i=1}^{k-s}\binom{k-s}{i}\binom{k^\prime-s}{
k^\prime-s-i+1}\left(
\frac{k-i}{k}\right)^{d(k) \left(k^\prime-s-i+1\right)}
$}\nonumber \\
&=
\resizebox{0.9\hsize}{!}{$
\displaystyle
\underbrace{
\sum_{i=1}^{k-s}\binom{k-s}{k-s-i}\binom{k^\prime-s}{k^\prime-s-i+1}\left(
\frac{k-i}{k}\right)^{d(k) \left(k^\prime-s-i+1\right)} }_{A}
$}.
\nonumber
\end{flalign}
\begin{lemma}
\label{lemma:bound-on-binomial-coefficient}
The binomial coefficient satisfies the well-known bound
\begin{flalign}
\binom{n}{k}\le 2^{n H\left( \frac{k}{n}\right)},
\label{eq:ub-on-binom-coef}
\end{flalign}
where $H(p)=p\log_2 p + (1-p)\log_2(1-p)$ is the binary entropy function.
\end{lemma}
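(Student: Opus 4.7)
The statement is a textbook entropy bound, so the plan is to give the standard one-line proof via the binomial expansion with optimized parameter.

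My approach is to observe that for any $p \in [0,1]$ the binomial theorem gives
\[
1 = (p + (1-p))^n = \sum_{j=0}^{n} \binom{n}{j} p^j (1-p)^{n-j},
\]
and since every term on the right is nonnegative, the single term indexed by $j=k$ satisfies $\binom{n}{k} p^k (1-p)^{n-k} \le 1$. This holds uniformly in $p$, so I am free to choose $p$ to make the left side as large as possible, which yields the tightest bound on $\binom{n}{k}$.

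The next step is to optimize: setting $p = k/n$ (the maximizer of $p^k (1-p)^{n-k}$) gives
\[
\binom{n}{k} \left(\tfrac{k}{n}\right)^{k} \left(\tfrac{n-k}{n}\right)^{n-k} \le 1,
\]
which rearranges to $\binom{n}{k} \le (n/k)^{k} (n/(n-k))^{n-k}$. Taking $\log_{2}$ of both sides and recognizing the right hand side as
\[
-k \log_{2}\!\tfrac{k}{n} - (n-k)\log_{2}\!\tfrac{n-k}{n} = n\left[-\tfrac{k}{n}\log_{2}\tfrac{k}{n} - \tfrac{n-k}{n}\log_{2}\tfrac{n-k}{n}\right] = n\, H\!\left(\tfrac{k}{n}\right)
\]
by the definition of the binary entropy $H$ used in the lemma, gives exactly \eqref{eq:ub-on-binom-coef}. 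Edge cases $k=0$ and $k=n$ are handled by the convention $0 \log_{2} 0 = 0$, so both sides equal $1$.

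There is essentially no obstacle here: the only subtlety is justifying the choice $p = k/n$, and this is bypassed entirely by noting that the inequality $\binom{n}{k} p^k (1-p)^{n-k} \le 1$ is valid for \emph{every} $p$, so we may simply plug in the value that produces the desired entropy expression. No probabilistic or asymptotic machinery is needed, and no assumption on $n, k$ beyond $0 \le k \le n$ is required.
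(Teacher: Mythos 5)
Your proof is correct and uses essentially the same argument as the paper: both extract the single term $\binom{n}{k}p^k(1-p)^{n-k}\le 1$ from the binomial expansion of $(p+(1-p))^n$ and substitute $p=k/n$. (Note also that both you and the paper's proof implicitly correct a sign typo in the lemma's stated definition of $H$; the standard binary entropy $H(p)=-p\log_2 p-(1-p)\log_2(1-p)$ is what is actually used.)
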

\begin{proof}
From the probability mass function of the binomial distribution with $n$
trials and probability of success $p=\frac{k}{n}<1$, we have
  \begin{flalign}
    1 &= \sum_{m=0}^{n} \binom{n}{m} p^m(1-p)^{n-m}
    \ge \binom{n}{k} p^k(1-p)^{n-k} \nonumber \\
    &= \binom{n}{k} 2^{n\left[ p\log_2{p}+(1-p)\log_2{(1-p)}\right]}
    = \binom{n}{k} 2^{-nH(p)}. \nonumber
  \end{flalign}
\end{proof}
\noindent
Applying \eqref{eq:ub-on-binom-coef} on the coefficients of $A$, we obtain
\begin{align}
 A &\le
 \sum_{i=1}^{k-s}
 2^{\left[ B_{1}(i)+B_{2}(i)+B_{3}(i)\right]}
\nonumber \\
 &\le (k-s) \max_{i}\left( 2^{\left[ B_{1}(i)+B_{2}(i)
+B_{3}(i)\right]}  \right),
\label{eq:ub-on-A-with-max}
\end{align}
where
\begin{itemize}
 \item $B_{1}(i)={(k-s)H\left(\frac{k-s-i}{k-s} \right)}$,
 \item $B_{2}(i)=$ $(k^\prime-s)$ $H\left(\frac{k^\prime-s-i+1}
{k^ {'} -s}\right)$, and
 \item $B_{3}(i) = d(k) \left( k^\prime-s-i+1\right)\log_2\left(
\frac{k-i}{k}\right)$.
\end{itemize}
Towards our objective, it suffices to require the right hand side of
\eqref{eq:ub-on-A-with-max} to vanish asymptotically faster than
$1/k^\rho, \rho>0$, for each value of $s \in \{0, \hdots, k\}$.
Equivalently, it suffices
\begin{flalign}
& \log_2{(k-s)}+\sum_{l=1}^{3} B_{l}(i) \le -\rho \log_2{(k)}
\label{eq:log-bound-on-A},
%
\end{flalign}
for all $ 0 \le s \le k$ and $ 1 \le i \le k-s$.
Expanding and rearranging terms we find that in order for
\eqref{eq:log-bound-on-A} to hold, it suffices
\begin{flalign}
 d(k) \ge
\frac{%
\resizebox{0.73\hsize}{!}{$
\left[
\begin{array}{l}
 \rho\log_2{(k)}+2\log_2(k-s)+ \log_2\left( k^\prime-s \right) \\+(k-s) H \left(
\frac{k-s-i}{k-s} \right) + (k^\prime-s) H \left(
\frac{k^\prime-s-i+1}{k^\prime-s}
\right)
\end{array}
\right]
$}
}{- \left( k^\prime-s-i+1 \right)\log_2\left( \frac{k-i}{k}\right)}.
\label{eq:random_almost_mds_inequality_on_d}
\end{flalign}

Our objective is now to show that the right hand side of
\eqref{eq:random_almost_mds_inequality_on_d} is $O(\log k)$.
Let $N$ and $D$ denote the numerator and denominator of the right hand side of
inequality \eqref{eq:random_almost_mds_inequality_on_d}.
For the numerator $N$, we have the following upper bound:
\begin{flalign}
 N& \le
%
\underbrace{(3+\rho)\log_2{(k)}+\log_2{(1+\epsilon)}}_{N_1} \nonumber\\
&\quad+\underbrace{k H \left( \frac{k-i}{k} \right)}_{N_2} +
\underbrace{(1+\epsilon)k H
\left(
\frac{(1+\epsilon)k-i+1}{(1+\epsilon)k}
\right)}_{N_3},
\nonumber
&
\end{flalign}
where the inequality is due to the monotonicity of the logarithm and the fact
that $g(x)=xH\left( \frac{x-y}{x} \right)$ is  increasing with respect to $x$
for $0 \le y \le x$.
For the denominator $D$, we have:
\begin{flalign}
\resizebox{0.99\hsize}{!}{$
D
 = \underbrace{\left( k-s-i \right. }_{\ge 0}+ \left. \epsilon k+1 \right)
\underbrace{\log_2\left( \frac{k}{k-i}\right)}_{\ge 0}
\ge  \epsilon k \log_2\left( \frac{k}{k-i}\right). \nonumber
$}
\end{flalign}
Recall that $\log(1+x)>x / (x+1)$ for $x>-1, x\neq0$.
Applying the inequality for $x=\frac{i}{k-i}>0$, we find that $D$ can be
further lower bounded as follows 
\begin{align}
 D \ge \epsilon k \frac{i}{k \log 2} \le \frac{\epsilon}{\log 2}i.
 \label{eq:llb-on-D}
\end{align}
We examine the ratio $N / D$ in parts.

\begin{enumerate}[(i)]
 \item For the first part, and for $k \ge (1+\epsilon)$ we have
\begin{flalign}
 \frac{N_1}{D}
 & \le \frac{(3+\rho+1)\log_2{(k)}}{
\epsilon k
\log_2\left( \frac{k}{k-i}\right)}
\le
\frac{(4+\rho)}{\epsilon} \log(k),
&
\label{eq:ratio_bound_pt1}
\end{flalign}
where for the second inequality we have used \eqref{eq:llb-on-D} and the fact
that $i \ge 1$.
\item For the second part, expanding the entropy we have
\begin{flalign}
\frac{N_2}{D}
&\le \frac{ k H \left( \frac{k-i}{k} \right)}{\epsilon k \log_2{\left(
\frac{k}{k-i}\right)} }
 = \frac{k-i}{\epsilon k}-\frac{\frac{i}{k}\log_2\left(
\frac{i}{k}\right)}{\epsilon  \log_2{\left( \frac{k}{k-i}\right)} }
\nonumber \\
&\stackrel{\eqref{eq:llb-on-D}}{\le}  \frac{1}{\epsilon}+\frac{1}{\epsilon}\frac{i \log_2\left(
\frac{k}{i}\right) }{
i / \log 2 }
\le \frac{2}{\epsilon} \log\left( k \right),
&
\label{eq:ratio_bound_pt2}
\end{flalign}
where the last inequality holds for $k\ge 2$.
\item
For the third part, for $i=1$, $N_3/D = 0$.
For $i \ge 2$, first observe that $h(k)=\log\left(\frac{k}{k-i} \right)$
is decreasing in $k$ for $0 \le i \le k$.
Since $(1+\epsilon)k > k$ and $ 2 \le i \le k-s$, exploiting the
monotonicity of $h(k)$, we have
\begin{flalign}
\frac{N_3}{D}
& \le \frac{ (1+\epsilon) k H \left(
\frac{(1+\epsilon)k-i+1}{(1+\epsilon)k} \right)}{\epsilon k \log_2{\left(
\frac{(1+\epsilon)k}{(1+\epsilon)k-i}\right)} }
\nonumber\\
&
\stackrel{\eqref{eq:llb-on-D}}{\le} \frac{(1+\epsilon)}{\epsilon}
	 \left(
	    1+ \frac{i-1}{(1+\epsilon)k}
	    \frac{\log_2{\frac{(1+\epsilon)k}{i-1}}}
	    {\displaystyle\frac{i}{(1+\epsilon)k \log 2} }\right)\nonumber\\
& \le \frac{2(1+\epsilon)}{\epsilon} \log \left(k\right),
&
\label{eq:ratio_bound_pt3}
\end{flalign}
where the last inequality holds when $k > e(1+\epsilon)$.
\end{enumerate}
Combining \eqref{eq:ratio_bound_pt1}, \eqref{eq:ratio_bound_pt2} and
\eqref{eq:ratio_bound_pt3},
we conclude that using $d(k)=c\log(k)$, where $c  = (8 + \rho +
2\epsilon)/\epsilon$,
suffices to force $\text{Pr}(\nexists \text{P.M.} \text{ in }
G_\mathcal{S}) \le 1/k^{\rho}$, which completes the proof. $\qed$


\subsection{Proof of Theorem \ref{thm:main-theorem:converse}}

Consider the decoding graph $G_\mathcal{S}$
corresponding to the $k \times k^\prime$ submatrix $\mathbf{G}_{\mathcal{S}}$ of  $\mathbf{G}$.
$G_\mathcal{S}$ is a random bipartite graph between $k$ input and $k^\prime$ encoded nodes, such that each encoded node has degree at most $d(k)$.

An input symbol is \textit{covered} by the set of $k^\prime$ encoded symbols, 
if and only if it participates with a nonzero coefficient in the formation of at least one symbol in the set.
In terms of the decoding graph, an input node is covered if and only if it is adjacent to at least one encoded node.

The probability of decoding failure is lower bounded by the probability that an uncovered input node exists in $G_\mathcal{S}$:
all input nodes being covered is a prerequisite for the $k$ input symbols to be retrievable from a set of $k^\prime$ encoded symbols.

The problem is equivalent to throwing $k^\prime \cdot d(k)$ balls into $k$ bins and
requiring that no bin is empty with high probability.
It is a standard result in balls and bins analysis that  throwing $\Omega
\left(k\log{k}\right)$ balls is necessary to that end.
It is hence imperative that $k^\prime \cdot d(k) =\Omega
\left(k\log{k} \right)$.
Taking into account that $k^\prime = (1+\epsilon)k$, we obtain the desired
result. $\qed$

\subsection{Proof of Theorem
\ref{thm:ub-prob-exist-u-few-parities}}

Let $\mathcal{C}_v \subseteq [k]$ be the subset of systematic symbols covered
by a parity $v$.
Also, let $\mathcal{P}_u$ be the set of parity symbols covering a systematic
symbol
$u$, and $M = \left| \mathcal{P}_u \right|$.
Note that $u \in \mathcal{C}_v \Leftrightarrow v \in \mathcal{P}_u$.
The total number of generated parities is $rk$.
Hence, $M$ is a binomial random variable with $rk$ trials and probability of
success equal to $\text{Pr}\left( u \in \mathcal{C}_v \right)$, the probability
that a parity $v$ covers the systematic symbol $u$.

Every parity $v$ throws its $d(k)$ edges uniformly at random over $[k]$,
independently, with replacement.
A simple union bound yields
\begin{align}
  \text{Pr}\left( u \in \mathcal{C}_v
  \right) 
  \le \sum_{l=1}^{d(k)}\frac{1}{k}
  = \frac{d(k)}{k}.
  \label{eq:ub-prob-v-covers-u}
\end{align}
Similarly, we can obtain a lower bound:
\begin{align}
  \text{Pr}\left(  u \in \mathcal{C}_v \right)
  &= 1 - \text{Pr}\left( u \notin \mathcal{C}_v \right)
  = 1- \left( 1 - \frac{1}{k}\right)^{d(k)} \nonumber \\
  &\ge 1- \exp\left( -\frac{d(k)}{k} \right).
  \label{eq:lb-prob-v-covers-u-using-exp}
\end{align}
Using the fact that $\exp(x) \le 1+x+x^2/2$ for $x \ge 0$, inequality
\eqref{eq:lb-prob-v-covers-u-using-exp} is simplified into
\begin{align}
  \text{Pr}\left(  u \in P_v \right) \ge \frac{d(k)}{k}-\frac{d(k)^2}{k^2}.
  \label{eq:lb-prob-v-covers-u}
\end{align}
The probability of that parity $v$ covers $u$ lies in the range described by
\eqref{eq:ub-prob-v-covers-u} and
\eqref{eq:lb-prob-v-covers-u}.
Based on these bounds, we can calculate a range  for the expected value of $M$.
Taking into account that $d(k) = c\log{k}$, we have
\begin{align}
rc\log(k) - \frac{rc\log^2(k)}{k} \le E \left[ M \right]  \le rc\log(k).
\label{eq:bound-on-expected-M}
\end{align}
Since each parity is created independently,
the following Chernoff bound on the lower tail of the distribution of
$M$ holds:
\begin{align}
 \text{Pr}\left( M \le (1-\epsilon)E[M] \right)  &\le \exp\left(
-\frac{\epsilon^2}{2} E[M] \right).
 \label{eq:chernoff-ub-lowtail-expected-number-of-cover} 
\end{align}
The right hand side of
\eqref{eq:chernoff-ub-lowtail-expected-number-of-cover} can be
further bounded as follows
\begin{align}
\text{Pr}\left( M \right. &\le \left. (1-\epsilon)E[M] \right) \nonumber\\
&\le 	\exp\left( -\frac{\epsilon^2}{2} \left[  rc\log(k) -
\frac{rc\log^2(k)}{k} \right]\right) \nonumber \\
&= 	\exp\left( - \log(k^{rc\epsilon^2/2}) \right)
\exp\left( \frac{\epsilon^2}{2} \frac{rc\log^2(k)}{k} \right)  \nonumber \\
&\le \frac{1}{k^{rc\epsilon^2/2}}\exp\left(
 \frac{\epsilon^2rc\log^2(k)}{2k} \right),
\end{align}
which is the desired result. $\qed$
\subsection{Proof of Theorem \ref{thm:availability-of-the-code}}
Let $\mathcal{P}_u$  denote the set of parity symbols that cover systematic
symbol $u$.
We assume for simplicity that $\left| \mathcal{P}_u \right| = m =
rc\log(k)$.
Given that every parity was generated independently, the footprints
$\mathcal{F}_{v}$ for $v \in \mathcal{P}_u$ are independent random variables.
We are interested in the maximum cardinality subset $\mathcal{D} \subseteq
\mathcal{P}_u$ such that any two parity symbols  $v_i, v_j \in \mathcal{D}$ are
isolated.

Consider the graph $H_u = (\mathcal{P}_u, E)$, where
edge $(v_i,v_j) \in E$ if and only if $v_i$ and $v_j$ are not isolated.
Then, $\mathcal{D}$ corresponds to the maximum independent set in $H_u$,
$I(H_u)$.
Since $H_u$ is a random graph, its independence number $\alpha(H_u)$ is a
random variable $Z
=f\left(\mathcal{F}_{v_1},\hdots,\mathcal{F}_{v_m}\right) $, which is a function of the
$m$ independently
drawn $\mathcal{F}_{v_i}$'s.
Function $f(\cdot)$ satisfies the bounded differences
condition, \textit{i.e.}, for any configuration $\mathcal{F}_{v_1}, \hdots,
\mathcal{F}_{v_m}$, substituting a single variable $\mathcal{F}_{v_i}$ with
another variable $\mathcal{F}_{v_i}^\prime$ cannot impact the function value
arbitrarily.
In the graph analogy, substituting $\mathcal{F}_{v_i}$ with
$\mathcal{F}_{v_i}^\prime$ for some $i$,
corresponds to removing a vertex from $H_u$ along with its incident edges, and
inserting a new vertex arbitrarily connected to other vertices.
\begin{lemma}
\label{lemma:size-of-independent-set-in-graph}
Consider an undirected graph $H = (V,E)$ with $|V|=M$ and let $\alpha(H)$ denote
its independence number, \textit{i.e.} the cardinality of the maximum
independent set $I(H) \subseteq V$.
Construct a graph $H^\prime$ as follows:
remove a node $v$ from $H$ along with all incident edges and
insert a new node $v^\prime$ connected to an arbitrary set of vertices in $H$.
Then $|\alpha(H) - \alpha(H^\prime)| \le 1$.
\end{lemma}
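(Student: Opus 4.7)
The plan is to prove the two-sided inequality $\alpha(H') \ge \alpha(H) - 1$ and $\alpha(H) \ge \alpha(H') - 1$ separately; combining them yields $|\alpha(H) - \alpha(H')| \le 1$. Both directions follow the same basic idea: a maximum independent set in one graph can be transformed into an independent set in the other by deleting at most one vertex.

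For the first direction, I would start with a maximum independent set $I(H)$ of size $\alpha(H)$ in $H$. The key observation is that the construction of $H'$ modifies only the edges incident to $v$; the induced subgraph on $V \setminus \{v\}$ is identical in $H$ and $H'$. Hence $I(H) \setminus \{v\}$ is an independent set in $H'$. If $v \notin I(H)$ this set has size $\alpha(H)$, and if $v \in I(H)$ it has size $\alpha(H)-1$. In either case, $\alpha(H') \ge \alpha(H) - 1$.

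For the reverse direction, I would exploit the symmetry of the construction: $H$ can equivalently be obtained from $H'$ by removing $v'$ (with its incident edges) and inserting the vertex $v$ with its original neighborhood. Applying the exact same argument to a maximum independent set $I(H')$ of $H'$ yields $\alpha(H) \ge \alpha(H') - 1$. Combining the two bounds gives the claim. $\qed$

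I do not anticipate any obstacle here; the only subtlety is to make the symmetry observation explicit so that the second inequality is not re-derived from scratch. This lemma will then feed into the bounded differences (McDiarmid) argument for $Z = f(\mathcal{F}_{v_1}, \dots, \mathcal{F}_{v_m})$ used in the main proof of Theorem \ref{thm:availability-of-the-code}, where each coordinate $\mathcal{F}_{v_i}$ corresponds to one vertex of $H_u$, and replacing it corresponds exactly to the vertex-swap operation analyzed above.
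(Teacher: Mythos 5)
Your proof is correct and matches the paper's argument essentially step for step: both directions show that deleting the modified vertex from a maximum independent set of one graph yields an independent set in the other, losing at most one vertex. The only cosmetic difference is that you invoke the symmetry of the vertex-swap operation to get the second inequality, whereas the paper re-runs the same argument explicitly for $I(H')$ and $v'$.
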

\begin{proof}
 Regardless of whether $v \in I(H)$ or not, the set $\mathcal{S} =
I(H)\backslash \{v\}$ is common in $H$ and $H^\prime$ and remains an independent
set in the latter.
 Therefore, $\alpha(H^\prime) \ge  |\mathcal{S}| \ge \alpha(H)-1$, where
equality in the second inequality holds only if $v \in I(H)$.
 Inversely, consider the maximum independent set in $H^\prime$,
$I(H^\prime)$.
 Irrespectively of whether $v^\prime \in I(H^\prime)$, the set
$\mathcal{S}^\prime = I(H^\prime)\backslash \{v^\prime\}$ is an independent set
 in $H$ too, implying that $\alpha(H) \ge \alpha(H^\prime)-1$.
 We conclude that $-1 \le \alpha(H) -\alpha(H^\prime) \le 1$, which is the
desired result.
\end{proof}
Based on the previous lemma, we have
\begin{align}
\scalebox{0.96}{$
 \displaystyle
 \max_{\mathcal{F}_{v_1},\hdots, \mathcal{F}_{v_m}, \mathcal{F}_{v_i}^\prime}\left|
 f\left(\hdots,\mathcal{F}_{v_i},\hdots \right) -
 f\left(\hdots,\mathcal{F}_{v_i}^\prime,\hdots\right)
 \right| \le 1.$
 }
 \label{eq:bounded-diff-on-function-of-footprints}
 \end{align}
 Provided that $\mathcal{F}_{v_i}$'s are independent and
$f(\cdot)$ satisfies condition
\eqref{eq:bounded-diff-on-function-of-footprints}, 
McDiarmid's inequality \cite{mcdiarmid:boundeddiff} yields
\begin{align}
 \text{Pr}\left(Z  \le E[Z]-t\right)\;
 \le
 \exp\left(- \frac{2 t^2}{m} \right),
 \label{eq:mcdiarmid-bound-on-maximum-independent-set}
\end{align}
for $t > 0$.

The concentration result of
\eqref{eq:mcdiarmid-bound-on-maximum-independent-set} holds, even if $E[Z]$
remains unknown.
A trivial lower bound on $E[Z]$ can be obtained using those vertices in $H_u$
that are disconnected components, \textit{i.e.}, that have degree equal to
zero.
Such vertices correspond to parity symbols that are isolated from all
other symbols in $\mathcal{P}_u$, not only those in $\mathcal{D}$,
and are always members of the maximum independent set.
The probability that a parity symbol $v \in \mathcal{P}_u$ is isolated from all
other symbols in $\mathcal{P}_u$ is
\begin{align}
\text{Pr}\left( v \text{ is isolated} \right)
&=
 \left(   \frac{k-\left| \cup_{j=1, j\neq i}^{m-1} \mathcal{F}_{v_j}\right|}{k}
\right)^{d(k)-1} \nonumber\\
 &\ge
 \left(  1 - \frac{(m-1)(d(k)-1)}{k} \right)^{d(k)-1}  \nonumber \\
 & = \exp\left( -\frac{(m-1)(d(k)-1)^2}{k} \right) \nonumber \\
 &\ge 1 -\frac{(m-1)(d(k)-1)^2}{k} \nonumber\\
 & \ge 1 -\frac{md(k)^2}{k}.
 \label{eq:lower-bound-on-probability-parity-v-is-isolated}
\end{align}
Multiplying with $m$, the number of symbols in $\mathcal{P}_u$, we obtain a
lower bound on the expected number of completely isolated parities in
$\mathcal{P}_u$, which in turn is a lower bound on $E[Z]$.
In other words,
\begin{align}
 m - m^2\frac{d(k)^2}{k} \le E[Z],
\end{align}
Therefore, we have
\begin{align}
 \text{Pr}\left(Z  \le m -  \frac{ \left[ m d(k) \right]^2}{k}-t\right)
 \le
  \text{Pr}\left(Z  \le E\left[Z\right]-t\right).
  \nonumber
\end{align}
Combining the above, with inequality
\eqref{eq:mcdiarmid-bound-on-maximum-independent-set}, we conclude that
\begin{align}
  \text{Pr}\left(Z  \le m -  \frac{ \left[ m d(k) \right]^2}{k}-t\right)
 \le
 \exp\left(- \frac{2 t^2}{m} \right).
 \label{eq:mcdiarmid-bound-on-the-lower-bound-of-expected-Z}
\end{align}
Let $t = (1-\alpha)m - \frac{\left[md(k)\right]^2}{k}$ for some $\alpha \in
(0,1)$. For sufficiently large $k$, $t$ will be nonnegative.
Substituting $t$ in \eqref{eq:mcdiarmid-bound-on-the-lower-bound-of-expected-Z},
we obtain
\begin{align}
 \text{Pr}\left( Z \le  \alpha m \right)
 \le
 \exp\left( - \frac{2\left[(1-\alpha)m- \frac{\left[md(k)\right]^2}{k}
\right]^2}{m} \right).
 \label{eq:mcdiarmids-bound-for-Z-less-than-alogk}
\end{align}
We are interested in the value of $\alpha$ for which $\text{Pr}\left( Z \le
\alpha m \right)$ decreases faster than $1/k^{1+\lambda}$, for some $\lambda >0$.
It suffices to require the right hand side
of \eqref{eq:mcdiarmids-bound-for-Z-less-than-alogk} is less than
$1/k^{1+\lambda}$.
Taking logarithms on both sides, it suffices to find $\alpha$ such that
\begin{align}
%
&
- \frac{2\left[(1-\alpha)m- \frac{m^2d(k)^2}{k} \right]^2}{m}
\le
-(1+\lambda) \log(k)  \nonumber\\
\Leftrightarrow &
- 2\left[{\underbrace{(1-\alpha)}_{\overline{\alpha}}}^2m- \frac{m^2d(k)^2}{k}
\right]^2
\le
-(1+\lambda) m\log(k) \nonumber\\
\Leftrightarrow &
\overline{\alpha}^2m^2 - \overline{\alpha} 2\frac{m^3d(k)^2}{k} +
\frac{m^4d(k)^4}{k^2} - \frac{1+\lambda}{2}m\log(k)
\ge
0. \nonumber
\end{align}
The last inequality is a quadratic inequality on $\overline{\alpha}$, satisfied
when
\begin{align}
 \overline{\alpha}
 &\ge
 \frac{2\frac{m^3d^2}{k}+ \sqrt{4 m^2\frac{1+\lambda}{2}m\log(k)}}{2
m^2}\nonumber\\
 &= \frac{m d(k)^2}{k}+ \sqrt{\frac{1+\lambda}{2}\frac{\log(k)}{m}}\nonumber\\
 &= \frac{rc^3\log^3(k) }{k}+ \sqrt{\frac{1+\lambda}{2rc}},
\end{align}
which for appropriate choice of $r$ and $c$, can be a solution
that asymptotically lies in $(0,1)$.
Therefore, for
\begin{align}
 \alpha
 \le
 1 - \sqrt{\frac{1+\lambda}{2rc}} - \frac{rc^3\log^3(k) }{k},
\label{eq:bound-for-valid-alpha}
 \end{align}
we have
\begin{align}
 \text{Pr}\left( Z \le \alpha \cdot m \right) \le k^{-(1+\lambda)}.
 \label{eq:bound-we-want-for-the-lower-tail-of-Z}
\end{align}
The probability that there exists a systematic symbol with availability lower
than $\alpha m$ can be bounded with a union bound over all systematic
symbols:
\begin{align}
 \text{Pr}\left( \exists u \text{ not $(\alpha m)$-available}\right) \le k \cdot k^{-(1+\lambda)} = k^{-\lambda},
\end{align}
which completes the proof.
$\qed$

\bibliographystyle{IEEEtran}

\end{document}